\documentclass[a4paper, cleveref, autoref, USenglish, thm-restate,pdfa]{lipics-v2021}

\usepackage{listings}
\usepackage[color=orange!25]{todonotes}
\usepackage{amsthm}
\usepackage{amssymb, mathtools}
\usepackage{amsmath}
\allowdisplaybreaks
\usepackage{mleftright}
\usepackage{tikz}
\usetikzlibrary{calc,intersections,through,arrows.meta,external}
\usepackage{tkz-euclide}
\usepackage{booktabs}
\usepackage{multirow}
\usepackage{xspace}
\usepackage{stmaryrd}
\usepackage{datetime}
\usepackage{graphicx}
\usepackage{microtype}
\usepackage{spverbatim}
\usepackage{rotating}
\usepackage{cancel}
\usepackage{empheq}
\usepackage{nicefrac}
\usepackage[title]{appendix}
\usepackage{aligned-overset}
\usepackage{csquotes}

\newcommand{\dir}{\mathrm{dir}}
\newcommand{\none}{\mathrm{none}}

\newcommand{\cl}{\textrm{cl}}

\pdfoutput=1 
\hideLIPIcs  

\title{Label Correcting Algorithms for the Multiobjective Temporal Shortest Path Problem} 

\author{Edina {Marica}}{Technical University of Munich, Campus Straubing for Biotechnology and Sustainability, Professorship of Optimization and Sustainable Decision Making, Am Essigberg 3, 94315 Straubing, Germany \and \url{https://odm.cs.tum.de/en/} }{edina.marica@tum.de}{https://orcid.org/0009-0003-0594-000X}{}

\author{Clemens {Thielen}}{Technical University of Munich, Campus Straubing for Biotechnology and Sustainability, Professorship of Optimization and Sustainable Decision Making, Am Essigberg 3, 94315 Straubing, Germany \and \url{https://odm.cs.tum.de/en/} }{clemens.thielen@tum.de}{https://orcid.org/0000-0003-0897-3571}{}

\author{Alina {Wittmann}}{Technical University of Munich, Campus Straubing for Biotechnology and Sustainability, Professorship of Optimization and Sustainable Decision Making, Am Essigberg 3, 94315 Straubing, Germany \and \url{https://odm.cs.tum.de/en/} }{alina.wittmann@tum.de}{https://orcid.org/0009-0001-1314-0198}{This author's work was supported by the Deutsche Forschungsgemeinschaft (DFG, German Research Foundation) -- Project number 508981269.}

\authorrunning{E. Marica and C. Thielen and A. Wittmann} 

\ccsdesc[500]{Theory of computation~Shortest paths}
\ccsdesc[500]{Mathematics of computing~Paths and connectivity problems}

\keywords{temporal graphs, multiobjective optimization, shortest paths, label correcting algorithm} 

\nolinenumbers 

\begin{document}

\maketitle

\begin{abstract}
Given a directed, discrete-time temporal graph~$G=(V,R)$, a start node~$s\in V$, and $p\geq1$ objectives, the single-source multiobjective temporal shortest path problem asks, for each~$v\in V$, for the set of nondominated images of temporal $s$-$v$-paths together with a corresponding efficient path for each image. A recent general label setting algorithm for this problem relies on two properties of the objectives--monotonicity and isotonicity. Monotonicity generalizes the nonnegativity assumption required by label setting methods for the classical additive single-objective shortest path problem on static graphs, while isotonicity ensures that the order of the objective values of two paths is preserved when both are extended by the same arc.

In this paper, we study the problem without assuming monotonicity and/or isotonicity. A key difficulty in this setting is that zero-duration temporal cycles may need to be traversed an arbitrary finite number of times to generate all nondominated images. This motivates the study of a restricted problem variant in which a maximum admissible path length~$K$ is imposed, and only paths containing at most~$K$ arcs are considered. We develop general label correcting algorithms for this setting and establish several sufficient conditions under which such a bound is not required, implying that the algorithms compute all nondominated images.
\end{abstract}

\section{Introduction}
\label{sec:introduction}

The shortest path problem is one of the most fundamental and well-studied optimization problems in graphs and networks. In the most prominent case where the length of a path is given as the sum of the lengths of its arcs, the problem can be solved efficiently in polynomial time by label setting approaches if all arc lengths are assumed to be nonnegative or by label correcting approaches if negative lengths are allowed.

Many graphs and networks, however, are inherently dynamic in the sense that the available connections between nodes -- or even the nodes themselves -- change over time. Prominent examples include airline or railway networks, where each train or flight connection has a specific departure time~\cite{holme2012review,wu2016efficient}. Such situations are naturally modeled by (discrete-time) \emph{temporal graphs}~\cite{holme2012review,Holme+Saramäki:book}, where the nodes (corresponding to airports or railway stations in the example) are static, but the arc set changes in discrete time steps. In this case, paths of interest are so-called \emph{temporal paths} that respect the temporal nature of the graph, meaning that the next arc of a path must always start at a node after the previous one has arrived there. Given the temporal aspect, several natural interpretations of a ``shortest'' temporal path arise depending on the specific objective of interest. The most prominent objectives include minimizing the arrival time at the destination, the duration (difference between arrival time and start time), or the traversal time (sum of arc lengths) of a path, or maximizing its start time (for a given latest possible arrival time). 

In many situations, for instance when choosing a path in an airline or railway network, one might actually not only be interested in optimizing one of these objectives, but several at the same time, meaning that the considered shortest path problem in a temporal graph becomes \emph{multiobjective}. The solutions one is typically interested in when considering optimization problems with several potentially conflicting objectives are called \emph{efficient solutions} or \emph{Pareto optimal solutions}~\cite{Ehrgott:book}. An efficient solution is characterized by the property that it cannot be improved in any objective without simultaneously worsening at least one other objective. The vector of objective values (the \emph{image} of an efficient solution in the objective space) is called a \emph{nondominated image} or a \emph{nondominated point}~\cite{Ehrgott:book}. Each nondominated image corresponds to a different trade-off among the objectives, and the goal in a multiobjective optimization problem typically consists of computing all nondominated images together with one corresponding efficient solution for each of them.

These observations motivate a systematic study of the (single-source) \emph{multiobjective temporal shortest path problem}. Given a temporal graph, a source node~$s$, and $p$~objectives, this problem consists of computing, for each destination node~$v$, all nondominated images of temporal $s$-$v$-paths together with one corresponding efficient path for each image. For a very general class of objectives, including arrival time, duration, traversal time, start time, and many others, this problem has recently been studied in~\cite{Bazgan+etal:monotone} under two assumptions on the objectives. The first assumption states that the value of a path in an objective cannot improve when the path is extended by another arc (referred to as \emph{monotonicity} in the following). Monotonicity generalizes the nonnegativity assumption on arc lengths required for label setting algorithms in the classical additive shortest path problem. The second assumption, referred to as \emph{isotonicity} in the following, states that if one path has an objective value at least as good as that of another path, then this relation is preserved when both paths are extended by the same arc.
Using both assumptions, Bazgan et al.~\cite{Bazgan+etal:monotone} present a general label setting algorithm for the multiobjective temporal shortest path problem with an arbitrary number of objectives. However, similarly to the classical static shortest path problem, where arc lengths are not necessarily nonnegative, the monotonicity and isotonicity assumptions imposed in~\cite{Bazgan+etal:monotone} are not satisfied by all relevant objectives in temporal shortest path problems. 
For example, in airline networks it can often be observed that booking a flight from airport~A via airport~B to airport~C is strictly cheaper than booking only the first leg from~A to~B. Thus, when minimizing travel cost, the objective violates monotonicity. This phenomenon can be systematically exploited via dedicated websites such as Skiplagged (\url{https://skiplagged.com}) using a technique known as \emph{hidden-city ticketing}~\cite{Wang+Ye:hidden-city-ticketing}, or \emph{skiplagging}, in which a passenger books a longer itinerary but intentionally skips one or more final flight legs.

\smallskip 

In this paper, we therefore study the single-source multiobjective temporal shortest path problem with general objectives that are \emph{not} required to satisfy monotonicity and/or isotonicity. As we demonstrate through explicit examples in Section~\ref{sec:preliminaries}, a key difficulty in the case of general objectives, even for static graphs, is that cycles may need to be traversed an arbitrary finite number of times in order to generate all nondominated images. 
We therefore consider a restricted problem variant in which a maximum admissible path length~$K$ is given as part of the input, and only paths containing at most~$K$ arcs are considered. For this setting, we develop general label correcting algorithms and prove their correctness both for isotonic objectives without monotonicity (Section~\ref{sec:iso}) and for fully general objectives where neither property is assumed (Section~\ref{sec:generalAlgo}). The key difference between these two settings is that isotonicity permits the removal of dominated labels during each iteration, whereas in the absence of isotonicity dominated labels can only be safely removed after the final iteration.
In Section~\ref{sec:unrestricted_path_length}, we further establish several sufficient conditions under which no bound on the admissible path length is required. Under these conditions, our algorithms compute all nondominated images when executed for a suitably bounded number of iterations. These conditions include, in particular, rational-valued additive objectives, thereby generalizing the classical setting of additive cost objectives on static graphs.

\subsection{Related work}
\label{subsec:related_literature}
The single-source multiobjective shortest path problem has been extensively studied in static graphs (see, e.g.,~\cite{Ehrgott:book, PAIXAObook}). In the case of nonnegative additive cost objectives, the problem is known to be efficiently solvable using label setting approaches~\cite{MARTINS1984236,Ehrgott:book}. For additive objectives that may take negative values, Corley and Moon~\cite{10.1007/BF00938761} propose a multiobjective label correcting algorithm, which is also applicable in the more general setting where objectives are isotonic but not necessarily monotone~\cite{REINHARDT2011605}. Furthermore, Carraway et al.~\cite{CARRAWAY199095} present a dynamic programming approach for the one-to-one multiobjective shortest path problem in acyclic graphs without assuming monotonicity or isotonicity.

In temporal graphs, which constitute the focus of our work, the problem has received comparatively less attention. Bazgan et al.~\cite{Bazgan+etal:monotone} study the single-source multiobjective temporal shortest path problem for objectives satisfying both monotonicity and isotonicity. Even though subpaths of efficient paths may not be efficient in the temporal setting, they show how a label setting approach can still be applied for any number of isotonic, monotone objectives.
Non-monotone objectives have so far only been considered in the biobjective case, while non-isotonic objectives have, to the best of our knowledge, not been considered at all in temporal graphs. Brunelli et al.~\cite{BRUNELLI2021106086} consider a biobjective temporal shortest path problem where the first objective is the earliest arrival time, while the second objective is isotonic, but not necessarily monotone. Their approach strongly relies on the ordering of nondominated images in the biobjective case and can, therefore, not be extended to more than two objectives. Hamacher et al.~\cite{HAMACHER2006238} sketch how a label correcting approach can be applied to the biobjective temporal shortest path problem with two additive cost objectives, which are isotonic, but in general non-monotone. 

\newpage
\section{Preliminaries}
\label{sec:preliminaries}
This section contains the definitions and terminology used throughout the paper.
\begin{definition}
    A directed (discrete-time) \emph{temporal graph} is a pair~$G=(V,R)$, where~$V$ is a finite set of nodes and~$R$ is a finite set of quadruples, where each quadruple~$r\in R$ consists of a start node~$\alpha(r)\in V$, an end node~$\omega(r) \in V$, a \emph{start time}~$\tau(r) \in \mathbb{Q}_{\geq 0}$, and a \emph{traversal time}~$\lambda(r)\in \mathbb{Q}_{\geq 0}$.\footnote{Nonnegative traversal times are required for our algorithms, whereas negative start times can be handled by adding the absolute value of the smallest negative start time to all start times.}
    A quadruple~$r\in R$ is called a \emph{(temporal) arc}.
\end{definition}

Let $n \coloneqq |V|, m \coloneqq |R|$, and let $\delta^+(v)$ and $\delta^-(v)$ denote the set of outgoing and ingoing arcs of~$v\in V$, respectively. Moreover, let $[k]\coloneqq \{1,\dots,k\}$ for~$k\in\mathbb{N}_{\geq1}$ and $[0]\coloneqq \emptyset$. 
The \emph{arrival time} of an arc~$r\in R$ at its end node~$\omega(r)$ is $\tau(r) + \lambda(r)$. Therefore, a \emph{temporal path} $P=(v_0,r_1,v_1,\dots,r_k, v_k)$ is a nonempty, finite sequence of nodes and temporal arcs such that, for each $i\in [k]$, it holds that $\alpha(r_i)=v_{i-1}$ and $\omega(r_i) = v_i$, and for each $i\in [k-1]$, it holds that $\tau(r_i) + \lambda(r_i) \leq \tau(r_{i+1})$. We call a path $P=(v_0)$ with no arcs a \emph{zero-arcs path}. The \emph{length} of a path $P=(v_0, r_1,v_1,\dots,r_k, v_k)$ is the number of arcs it is composed of, i.e. $|P|=k$. A \emph{temporal $s$-$v$-path} is a temporal path $P= (v_0, r_1,v_1,\dots, r_k, v_k)$ with $v_0 = s$ and $v_k = v$, and we call the nodes~$\alpha(P)\coloneqq v_0$ and $\omega(P)\coloneqq v_k$ the \emph{start node} and the \emph{end node} of~$P$, respectively. A path $P=(v_0,r_1,v_1,\dots,r_k,v_k)$ is called \emph{simple} if the nodes $v_0,\dots,v_{k-1}$ are pairwise distinct, and it is called a \emph{(temporal) cycle} if $k\geq 1$ and $v_0 = v_k$. 
The \emph{concatenation} of two paths $P=(v_0,r_1,v_1,\dots,r_k, v_k)$ and $P'=(v'_0,r'_1,v'_1,\dots,r'_{k'}, v'_{k'})$ with $v_k = v'_0$ and $\tau(r_k)+\lambda(r_k)\leq \tau(r'_1)$ is defined as the path $(P, P')\coloneqq (v_0,r_1,v_1,\dots,r_k, v_k, r'_1,v'_1,\dots,r'_k, v'_{k'})$.
Given a path $P=(v_0,r_1,v_1,\dots,r_k, v_k)$, any path $(v_i, r_{i+1}, v_{i+1}, \dots r_j, v_j)$ with $0\leq i\leq j\leq k$ is called a \emph{subpath} of~$P$.

\smallskip

We evaluate the quality of paths using several \emph{criteria} or \emph{objectives}. The following definition generalizes the definition from~\cite{Bazgan+etal:monotone}, where nonnegative, rational-valued objectives are considered, to values in any totally ordered set. Since our analysis depends only on the structural properties of the objectives and not on a particular encoding, we do not fix a concrete representation.

\begin{definition}
    An \emph{objective} is a tuple~$(M, \leq, f, \oplus, \overline{0}, \dir)$, where~$M$ is a nonempty set, $\leq$ is a total order on~$M$, $f: R \rightarrow M$ is a function with values in~$M$, $\oplus : M \times M \rightarrow M$ is a binary operator, $\overline{0}\in M$ is a left-neutral element for $\oplus$ (i.e., $\overline{0} \oplus b = b$ for all~$b \in M$), and dir $\in \{\min, \max\}$ is a direction of optimization. Given an objective $(M, \leq, f, \oplus, \overline{0}, \dir)$, we use~$<$ to denote the strict total order corresponding to~$\leq$.
\end{definition}

For instance, maximizing the start time would be expressed by the \emph{latest start time objective}~$(\mathbb{Q}_{\geq0}\cup\{+\infty\}, \leq, \tau,\min,+\infty,\max)$, and minimizing the arrival time by the \emph{earliest arrival time objective}~$(\mathbb{Q}_{\geq0}, \leq, \lambda+\tau,\max,0,\min)$. Note that, as in~\cite{Bazgan+etal:monotone}, the binary operator $\oplus$ is required to be associative, i.e., $(a\oplus b)\oplus c = a\oplus (b\oplus c)$ for all $a,b,c \in M$, but not commutative, i.e., we can have $a\oplus b \neq b\oplus a$. Next, we define two properties of objectives, which were jointly referred to as \emph{monotonicity} in~\cite{Bazgan+etal:monotone}. Here, we define them separately for ease of discussion.

\begin{definition}\label{def:monotonicity}
    An objective $(M, \leq, f, \oplus, \overline{0}, \min)$ is \emph{monotone} if $a \leq a\oplus b$ for all~$a,b\in M$, and an objective $(M, \leq, f, \oplus, \overline{0}, \max)$ is \emph{monotone} if $a \geq a\oplus b$ for all~$a,b\in M$.
\end{definition}

\begin{definition}\label{def:isotonicity}
    An objective $(M, \leq, f, \oplus, \overline{0}, \dir)$ is \emph{isotonic} if $a \leq a'$ with $a,a'\in M$ implies that $a \oplus b \leq a' \oplus b$ for all~$b \in M$.
\end{definition}

Note that the isotonicity condition can be formulated equivalently by asking that $a \geq a'$ with $a,a'\in M$ implies that $a \oplus b \geq a' \oplus b$ for all~$b \in M$. We use both formulations in the following.  
The next definition defines objective values of paths in a temporal graph.
\begin{definition}
    Given an objective~$(M, \leq, f, \oplus, \overline{0}, \dir)$, the \emph{objective value}~$f(P)$ of a path $P=(v_0,r_1,v_1,\dots,r_k, v_k)$ is defined inductively by setting $f(P) \coloneqq \overline{0}$ for $k=0$, and $f(P) \coloneqq f(P') \oplus f(r_k)$ for $k \geq 1$, where $P' = (v_0,r_1,v_1,\dots,r_{k-1}, v_{k-1})$ is the $v_0$-$v_{k-1}$-subpath of $P$ with the last arc removed.
\end{definition}

Intuitively, monotonicity of an objective~$(M, \leq, f, \oplus, \overline{0}, \dir)$ means that extending a path~$P$ by an additional arc~$r$ cannot improve its objective value~$f(P)$. Isotonicity means that, if two temporal paths~$P_1$ and~$P_2$ with objective value~$f(P_1)$ at least as good as~$f(P_2)$ are extended by the same arc~$r$, then the objective value $f(P_1) \oplus f(r)$ of the extension of~$P_1$ will also be at least as good as the objective value $f(P_2) \oplus f(r)$ of the extension of~$P_2$.

\noindent
The following definitions introduce required terminology from multiobjective optimization.
\begin{definition}
    Let $(M_j, \leq_j, f_j, \oplus_j, \overline 0_j, \dir_j), j\in[p]$, be $p\geq 1$~objectives, and let $y=(y_1, \dots, y_p)\in M_1 \times \dots \times M_p$ and $y'=(y'_1, \dots, y'_p) \in M_1 \times \dots \times M_p$ be two vectors. The vector~$y$ is \emph{weakly dominated} by the vector $y'$ if $y'_j \leq_j y_j$, for all~$j\in [p]$ with $\dir_j = \min$, and $y_j \leq_j y'_j$ for all~$j\in [p]$ with $\dir_j = \max$. The vector~$y$ is \emph{dominated} by $y'$ if it is weakly dominated by~$y'$ and~$y \neq y'$.
\end{definition}

\begin{definition}
    Given $p\geq 1$ objectives $(M_j, \leq_j, f_j, \oplus_j, \overline 0_j,\dir_j), j\in[p]$, we call the vector $(f_1(P), \dots, \allowbreak f_p(P))\in M_1 \times \dots \times M_p$ the \emph{image} of the temporal path~$P$ and say that~$P$ is \emph{(weakly) dominated} by another temporal path~$P'$ if its image is (weakly) dominated by the image of~$P'$. If an $s$-$v$-path~$P$ is not dominated by any other $s$-$v$-path, we call~$P$ \emph{efficient} and its image \emph{nondominated}.
\end{definition}

In static shortest path problems, cycles that can be repeatedly traversed to improve an objective value are referred to as \emph{negative (cost)} or \emph{improving cycles}. In the temporal case, such cycles must necessarily be zero-duration cycles (i.e., all arcs in the cycle have traversal time zero and the same start time), and in our setting with general objectives, whether traversing a zero-duration cycle improves an objective might depend on the subpath traversed previously. This is made formal in the following definitions.

\begin{definition}\label{def:pathseq}
The \emph{duration} of a temporal path $P=(v_0,r_1,v_1,\dots,r_k, v_k)$ is $\tau(r_k)+\lambda(r_k)-\tau(r_1)$ if $k\geq1$, and zero otherwise. A cycle with duration zero is called a \emph{zero-duration cycle}. 
A zero-duration cycle $C=(v_0,r_1,v_1,\dots,r_c,v_0)$ is said to be \emph{reachable (from~$s$)} if there exists a temporal $s$-$v_0$-path $P=(v'_0,r'_1,v'_1,\dots,r'_k,v'_k)$ with $v'_k=v_0$ and $\tau(r'_k)+\lambda(r'_k)\leq\tau(r_1)$.
Given a zero-duration cycle~$C$ and a temporal path~$P$ satisfying the above condition, we define the infinite sequence of paths $(P,C^{h})_{h\geq 0}$, where~$C^h\coloneqq(C,\dots,C)$ denotes the temporal path obtained by traversing~$C$ exactly $h$~times.
\end{definition}

\begin{definition}\label{def:augmentnig_cycle}
Given a temporal graph $G=(V,R)$, a node~$s\in V$, and $p\geq1$ objectives $(M_j, \leq_j,f_j, \oplus_j, \overline 0_j,\dir_j)$, $j\in [p]$, a zero-duration cycle $C=(v_0,r_1,v_1,\dots,r_c,v_0)$ is an \emph{improving cycle in objective~$j$} if there exists a temporal $s$-$v_0$-path $P$ such that $(P,C^h)_{h\ge0}$ is well-defined (see Definition~\ref{def:pathseq}) and, for every $h\geq0$, there exists $h'>h$ with $f_j(P,C^{h'}) <_j f_j(P,C^h)$ if $\dir_j=\min$, or $f_j(P,C^{h'}) >_j f_j(P,C^h)$ if $\dir_j=\max$. We call $C$ an \emph{improving cycle} if it is improving in at least one objective~$j$.
\end{definition}

Thus, an improving cycle~$C$ in objective~$j$ is a zero-duration cycle reachable from~$s$ that can be traversed indefinitely, such that each traversal can eventually be followed by further traversals yielding strictly better values in objective~$j$. In contrast to additive cost objectives on static graphs, an improving cycle in a temporal graph need not improve the objective at every repetition. Nevertheless, it may cause the set of nondominated images of $s$-$v$-paths to be infinite or empty for some nodes~$v$.

\subsection{Problem Definition and Structural Results}\label{sec:ProblemDef}

Definition~\ref{def:augmentnig_cycle} motivates the following definition of the single-source multiobjective temporal shortest path problem, which generalizes the definition from~\cite{Bazgan+etal:monotone} to our more general definition of objectives and cases where improving cycles may exist:

\begin{definition}[Single-source multiobjective temporal shortest path problem (SSMTSPP)]\label{def:SSMTSPP}
\textbf{Instance}: A temporal graph $G=(V,R)$, a start node $s\in V$, and $p\geq 1$ objectives $(M_j,\leq_j,\allowbreak f_j, \oplus_j, \overline{0}_j, \dir)$, $j\in [p]$, the first of which is the earliest arrival time objective $(\mathbb{Q}_{\geq 0}, \leq,\allowbreak \tau+\lambda, \max, 0, \min)$.\\
\textbf{Task}: Either return the existence of an improving cycle, or compute the set of nondominated images of temporal $s$-$v$-paths for each~$v \in V$ together with a corresponding efficient path for each such image.
\end{definition}

Note that, as in~\cite{Bazgan+etal:monotone}, we assume that the first objective is the earliest arrival time objective, whose value is required to determine further arcs by which a temporal path can be extended. If this objective is not of interest, it can be excluded in a postprocessing step that returns only those images that are nondominated with respect to the other $p-1$~objectives.

In static graphs with additive cost objectives, the single-source shortest path problem is well known to be solvable using label correcting algorithms, both in the single- and multiobjective settings~\cite{e7a0cc48-246d-3ba8-9112-de6faa7aa794, P-923, 10.1007/BF00938761}. A key property underlying these algorithms is that, unless an improving cycle (negative cycle) is reachable from the start node, all nondominated images arise from simple paths of length at most~$n-1$. Consequently, it suffices to check each arc at most $n$~times, and any change in a label set during the $n$-th iteration certifies the existence of an improving cycle.
As we now demonstrate, no analogous bound on the number of iterations exists in our temporal setting with general objectives, even when all objectives are isotonic. First, positive-duration cycles may need to be traversed once in an efficient temporal path, while their positive duration prevents repeated traversal, implying that paths of length up to~$m$ must be considered in general. Moreover, the following example shows that, for general objectives, zero-duration cycles may need to be traversed an arbitrary finite--potentially exponential--number of times in order to obtain all nondominated images.

\begin{example}\label{example:exp_path_length}
    For $k\in\mathbb{N}_{\geq1}$, let $G=(V,R)$ with $V=\{s, v_1, \dots, v_k\}$, $R=\{r_1,\dots, r_{k+1}\}$ denote the temporal graph with $|V|=k+1$ nodes and $|R|=k+1$ arcs illustrated in Figure~\ref{fig:example_exp_path_length}, where $\tau(r_1) = 0$, $\lambda(r_1) = 1$, and $\tau(r_i) = 1$, $\lambda(r_i) = 0$ for $i\in\{2,\dots,k+1\}$. Let $p=2$. The first objective is the earliest arrival time objective $(\mathbb{Q}_{\geq 0}, \leq,\tau+\lambda, \max, 0, \min)$. The second objective is the isotonic but non-monotone objective $(\mathbb{Q}_{\geq 0}, \leq,f_2, \oplus_2 , 0, \min)$ defined by $f_2: R \rightarrow \mathbb{Q}_{\geq 0}$, $f_2(r_1) = k^{k+1}$, $f_2(r_i)=0$ for $i\in\{2,\dots,k\}$, and $a\oplus_2 b \coloneqq \max\{a-1, b\}$ for $a,b\in \mathbb{Q}_{\geq 0}$. 

    \smallskip

    For each node $v\in V$, there exists exactly one nondominated image of an $s$-$v$-path. For $v=s$, this image is~$(0,0)$. For each~$v\in\{v_1,\dots,v_k\}$, the unique nondominated image is~$(1,0)$. To see this, observe that after traversing~$r_1$, the first objective attains the value~$1$ and remains unchanged when the cycle~$C=(v_1, r_2, v_2,\dots ,r_{k+1},v_1)$ is traversed, since all arcs in~$C$ have zero traversal time and start time~$1$. On the other hand, each traversal of~$C$ reduces the second objective value by exactly~$k$, as long as it is positive. Starting from $f_2(r_1)=k^{k+1}$, it therefore takes $T=k^k+1$ traversals of~$C$ to reduce the second objective value at each node $v\in\{v_1,\dots,v_k\}$ to~$0$. Afterwards, further traversals of~$C$ do not change the value of either objective. 
    Consequently, the unique nondominated image at each node $v\in\{v_1,\dots,v_k\}$ is obtained by a path that traverses~$C$ exactly $T$~times. Such a path has length in $\Omega(k^{k+1})=\Omega(|R|^{|R|+1})$, which is exponential in the input size. By increasing the value of~$f_2(r_1)$, efficient paths of any finite length can be enforced. 
\end{example}

Furthermore, the following example demonstrates that non-simple cycles arising from the traversal of several non-disjoint simple cycles must also be considered.

\begin{example}\label{example:non-simple_cycle}
    Let $G=(V,R)$ with $V=\{s,v\}$ and $R=\{r_1,r_2,r_3\}$ denote the temporal graph illustrated in Figure~\ref{fig:example_non-simple_improving_cycle}, where $\tau(r_i)=\lambda(r_i)=0$ for each $i\in\{1,2,3\}$. The first objective is the earliest arrival time objective $(\mathbb{Q}_{\geq 0}, \leq, \tau+\lambda, \max, 0, \min)$. The second and third objectives are isotonic but non-monotone objectives of the form $(\mathbb{Q}\cup\{+\infty\}, \leq, f_j, \oplus_j, +\infty, \min)$ for $j\in\{2,3\}$, where the arc-value functions $f_j:R\to\mathbb{Q}\cup\{+\infty\}$ are defined by $f_2(r_1)=f_3(r_1)=f_3(r_2)=f_2(r_3)=0, f_2(r_2)=f_3(r_3)=-1$, and the binary operators satisfy $a\oplus_j b \coloneqq \min\{a,b\}$ for all $a,b\in\mathbb{Q}\cup\{+\infty\}$, $j\in\{2,3\}$.

    The only simple $s$-$s$-paths are the zero-arcs path~$P=(s)$ and the simple cycles $C_1=(s,r_1,v,r_2,s)$ and $C_2=(s,r_1,v,r_3,s)$, with images $f(P)=(0,+\infty,+\infty)$, $f(C_1)=(0,-1,0)$, and $f(C_2)=(0,0,-1)$. Repeated traversals of either~$C_1$ or~$C_2$ do not yield any further improvement in any objective. However, the image $f(C)=(0,-1,-1)$ of the non-simple cycle $C=(s,r_1,v,r_2,s,r_1,v,r_3,s)$ is nondominated and cannot be obtained by repeated traversals of any single simple cycle.
\end{example}

\begin{figure}[h]
\begin{minipage}{0.5\textwidth}
    \vspace{-0.59cm}
    \begin{tikzpicture}[>=triangle 45, auto, thick, font=\small]
    
        \coordinate (S) at (-2,0);        
        \coordinate (v1) at (0,0);
    
        \coordinate (center) at (0,1.5);
    
        \coordinate (v2) at (1.299, 2.25);   
        \coordinate (vk) at (-1.299, 2.25);  
    
        \filldraw (S) circle (2pt) node[left] {\Large  $s$};
        \filldraw (v1) circle (2pt) node[below=5pt] {\Large $v_1$};
        \filldraw (v2) circle (2pt) node[above right] {\Large $v_2$};
        \filldraw (vk) circle (2pt) node[above left] {\Large $v_k$};
    
        \draw[->] (S) -- (v1) 
            node[midway, below=3pt, align=center] {
                $\tau(r_1)=0$ \\ 
                $\lambda(r_1)=1$ \\
                $f_2(r_1)=k^{k+1}$ 
            };
    
        \draw[->] (v1) arc (-90:30:1.5)
            node[midway, right=5pt, align=center, yshift=10pt] {
                $\tau(r_2)=1$ \\ 
                $\lambda(r_2)=0$ \\
                $f_2(r_2)=0$    
            };
    
        \draw (center) ++(30:1.5) arc (30:70:1.5);
        \node at ($(center)+(90:1.5)$) {$\cdots$};
        \draw[->] (center) ++(110:1.5) arc (110:150:1.5)
            node[above=3pt, align=center] at ($(center)+(90:1.5)$) {
                $\tau(r_3) = \dots = \tau(r_{k})=1$ \\
                $\lambda(r_3) = \dots = \lambda(r_{k})=0$ \\
                $f_2(r_3) = \dots = f_2(r_{k})=0$           
            };
    
        \draw[->] (vk) arc (150:270:1.5)
            node[midway, left=5pt, align=center, yshift=10pt] {
                $\tau(r_{k+1})=1$ \\ 
                $\lambda(r_{k+1})=0$ \\
                $f_2(r_{k+1})=0$  
            };
    
    \end{tikzpicture}
    \caption{Example graph with isotonic second objective in which nondominated images arise exclusively from paths of exponential length.}\label{fig:example_exp_path_length}
\end{minipage}
\hspace{0.04\textwidth}
\begin{minipage}{0.45\textwidth}
    \centering
    \begin{tikzpicture}[>=triangle 45, thick, auto, font=\small]
        
        \coordinate (center) at (0,2);
        \coordinate (s) at (0,0);
        \coordinate (v2) at (3,0);
        
        \filldraw (s) circle (2pt) node[left] {\Large $s$};
        \filldraw (v2) circle (2pt) node[right] {\Large $v$};
        
        \filldraw (s) circle (2pt);
        \filldraw (v2) circle (2pt);
        
        \draw[->] (s) -- (v2)
            node[pos=0.5, above, sloped, align=center] {
                $\tau(r_1)=0$\\
                $\lambda(r_1)=0$
            }    node[pos=0.5, below, sloped, align=center] {
            $f_2(r_1)=0$ \\
            $f_3(r_1)=0$};
        
        \draw[->] (v2) 
            arc[start angle=0, end angle=180, radius=1.5]
            node[pos=0.5, above, sloped, align=center] {
                $\tau(r_2)=\lambda(r_2)=0$ \\ 
                $f_2(r_2)=-1$\\
                $f_3(r_2)=0$
            };
        
        \draw[->] (v2) 
            arc[start angle=0, end angle=-180, radius=1.5]
            node[pos=0.5, below, sloped, align=center] {
                $\tau(r_3)=\lambda(r_3)=0$ \\ 
                $f_2(r_3)=0$\\
                $f_3(r_3)=-1$
            };

    \end{tikzpicture}
    \caption{Example graph with isotonic second and third objectives in which a nondominated image arises exclusively from a non-simple cycle that is not a repeated traversal of a single simple cycle.
    }\label{fig:example_non-simple_improving_cycle}
\end{minipage}
\end{figure}

As demonstrated in Example~\ref{example:exp_path_length}, the length of efficient paths--and thus the number of iterations required by a label correcting algorithm to compute all nondominated images--cannot, in general, be bounded by any finite value known a priori. 
Moreover, Example~\ref{example:non-simple_cycle} illustrates that non-simple cycles arising from the traversal of several non-disjoint simple cycles must be taken into account, which can, in general, be infinitely many. We therefore also consider a restricted problem variant in which a maximum admissible path length~$K$ is given as part of the input, and only paths containing at most~$K$ arcs are considered. The following definition extends the notion of efficiency to this setting.

\begin{definition}
Given $p\geq 1$~objectives $(M_j, \leq_j, f_j, \oplus_j, \overline 0_j,\dir_j), j\in[p]$, and~$k\in\mathbb{N}$, we say that an $s$-$v$-path~$P$ of length at most~$k$ is \emph{$k$-efficient} and its image~$(f_1(P),\dots,f_p(P))$ is \emph{$k$-nondominated} if~$P$ is not dominated by any other $s$-$v$-path of length at most~$k$.
\end{definition}

\noindent
The problem variant with bounded path length is formally defined as follows:

\begin{definition}[Single-source multiobjective temporal shortest path problem with maximum path length (SSMTSPP-MPL)] \,\label{def:SSMTSPP-MPL}\\
\textbf{Instance}: A temporal graph $G=(V,R)$, a start node~$s\in V$, a maximum path length~$K$, and $p\geq 1$~objectives $(M_j, \leq_j,f_j, \oplus_j, \overline{0}_j, \dir)$, $j\in [p]$, the first of which is the earliest arrival time objective $(\mathbb{Q}_{\geq 0}, \leq,\tau+\lambda, \max, 0, \min)$.\\
\textbf{Task}: For each~$v \in V$, compute the set of $K$-nondominated images of temporal~$s$-$v$-paths together with a corresponding $K$-efficient path for each such image.
\end{definition}

In the following sections, we present label correcting algorithms for the SSMTSPP-MPL with isotonic objectives (Section~\ref{sec:iso}) and for fully general objectives (Section~\ref{sec:generalAlgo}). In Section~\ref{sec:unrestricted_path_length}, we then establish sufficient conditions under which no bound on the admissible path length is required, so that our algorithms solve the general SSMTSPP. The following definition introduces labels as used by our algorithms and extends the notion of dominance to labels.

\begin{definition}
    Given a temporal graph $G=(V,R)$ and $p\geq 1$ objectives $(M_j, \leq_j,\allowbreak f_j, \oplus_j, \overline 0_j,\dir_j)$, $j\in [p]$, a \emph{label} of a node~$v\in V$ is a $(p+3)$-tuple $(w_1,\dots,w_p, r, x, y)$, where $w_j \in M_j$, $j\in [p]$, represent the values of the objectives, $r \in \delta^{-}(v)$ and $x$ are the arc and the number of the label at the predecessor~$\alpha(r)$ from which the label has been obtained, respectively, and~$y$ is the number of the label at the current node~$v$.

    Given a label $l=(w_1, \dots, w_p, r, x, y)$, the vector $(w_1, \dots, w_p)$ is called the \emph{image corresponding to~$l$}, and~$l$ is called a \emph{corresponding label for $(w_1, \dots, w_p)$}. If $l=(w_1, \dots, w_p, r, x, y)$ and $l'=(w'_1, \dots, w'_p, r', x', y')$ are two labels of the same node~$v$, we say that~$l$ is \emph{(weakly) dominated} by~$l'$ if $(w_1, \dots, w_p)$ is (weakly) dominated by $(w'_1, \dots, w'_p)$.
\end{definition}

\section{Isotonic Objectives}
\label{sec:iso}
In this section, we study the SSMTSPP-MPL with isotonic objectives that are not required to satisfy monotonicity in the sense of Definition~\ref{def:monotonicity}; we refer to this variant as the \emph{isotonic SSMTSPP-MPL}. This setting generalizes the classical additive case in which arc costs may be negative. In that case, isotonicity holds trivially, whereas monotonicity fails whenever negative arc values are permitted.

Our label correcting algorithm for isotonic objectives is given in Algorithm~\ref{list:isotonic_labeL_correcting}. Due to isotonicity, nondominated images can arise only from extensions of nondominated labels. Hence, dominated labels can be discarded in each iteration.

\begin{lstlisting}[mathescape=true, caption={Multiobjective Temporal Label Correcting Algorithm for Isotonic Objectives},label={list:isotonic_labeL_correcting},captionpos=t,abovecaptionskip=-\medskipamount]
INPUT: temporal graph $G=(V,R)$, start node $s\in V$, maximum path
       length $K$, $p\geq 1$ isotonic objectives $(M_j, \leq_j,f_j, \oplus_j, \overline{0}_j, \dir_j)$, $j\in [p]$,
       where $(M_1, \leq_1,f_1, \oplus_1, \overline{0}_1, \dir_1)$ is the earliest arrival time objective
OUTPUT: for each $v\in V$, the set of labels corresponding to all 
        $K$-nondominated images of $s$-$v$-paths
1. initialize $L(v,k)\coloneqq [\,]$ for each $v\in V$ and each $k\in \{0,\dots,K\}$
2. add label $(\overline{0}_1, \dots, \overline{0}_p, \none, \none, 1)$ to $L(s,0)$
3. $z \coloneqq 1$
4. for $k = 0, \dots, K$ do
5.   for $v\in V$ do
6.       $L(v, k+1) = L(v, k)$
7.   for $v\in V$ with $L(v,k) \neq \emptyset$ do
8.       for $r' \in \delta^{+}(v)$ do
9.           $u \coloneqq \omega(r')$
10.          for $l = (w_1, \dots, w_p, r, x, y) \in L(v, k)$ do
11.              if $\tau(r') \geq w_1$ then
12.                  $z \coloneqq z + 1$
13.                  $l' \coloneqq (w_1\oplus_1 f_1(r'), \dots, w_p\oplus_p f_p(r'), r', y, z)$
14.              if $l'$ is not weakly dominated by any $\tilde{l} \in L(u, k+1)$ then
15.                  delete all labels in $L(u, k+1)$ dominated by $l'$
16.                  add $l'$ to $L(u, k+1)$
17.  if $L(v, k+1) = L(v,k)$ for all $v \in V$ or $k=K$ then
18.      return $L(v,k)$ for all $v \in V$
\end{lstlisting}

Observe that the earliest arrival time objective $(\mathbb{Q}_{\geq0}, \leq, \tau+\lambda, \max, 0, \min)$ is isotonic. Hence, using it as the first objective is consistent with the assumptions of Algorithm~\ref{list:isotonic_labeL_correcting}.

We begin by establishing two propositions concerning the label sets~$L(v,k)$ generated by Algorithm~\ref{list:isotonic_labeL_correcting}. Their proofs adapt the main arguments used in the correctness proof of the label correcting algorithm for the static multiobjective shortest path problem (see, e.g.,~\cite{10.1007/BF00938761}) to the temporal setting with isotonic objectives.

\begin{proposition}\label{prop:iso1}
    Suppose that Algorithm~\ref{list:isotonic_labeL_correcting} is executed without the first stopping criterion (i.e., $L(v,k+1)=L(v,k)$ for all $v\in V$) in line~17. Then, for each~$v\in V$ and each~$k \leq K$, the label set~$L(v,k)$ generated by Algorithm~\ref{list:isotonic_labeL_correcting} contains a corresponding label for each image of a $k$-efficient $s$-$v$-path. 
\end{proposition}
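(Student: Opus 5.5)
I would prove the statement by induction on~$k$. The claim to carry along is: for every $v\in V$ and every $k$-nondominated image $y$ of $s$-$v$-paths, the set $L(v,k)$, as it stands at the end of iteration $k-1$ (or after initialization when $k=0$), contains a label whose image equals~$y$.

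\emph{Base case $k=0$.} The only $s$-$v$-path of length~$0$ is the zero-arcs path $(s)$, which exists only for $v=s$. Its image $(\overline0_1,\dots,\overline0_p)$ is exactly the image of the label inserted in line~2.

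\emph{Inductive step.} Let $P$ be a $(k+1)$-efficient $s$-$v$-path. Either $|P|\le k$ or $|P|=k+1$.

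\emph{Case $|P|\le k$.} Then $P$ is also $k$-efficient, since any path of length at most $k$ dominating it would also have length at most $k+1$. By the induction hypothesis, $L(v,k)$ contains a label with image $f(P)$, and line~6 copies this label into $L(v,k+1)$.

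\emph{Case $|P|=k+1$.} Write $P=(P',r')$, where $P'$ is an $s$-$u$-path of length $k$ and $u=\alpha(r')$. The key claim is that $L(u,k)$ contains a label $l$ whose image weakly dominates $f(P')$.
\begin{itemize}
\item If $P'$ is $k$-efficient, this follows directly from the induction hypothesis.
\item Otherwise, $P'$ is dominated by some $k$-efficient $s$-$u$-path. Such a path exists because the set of $s$-$u$-paths of length at most $k$ is finite, so dominance has a maximal element above~$f(P')$. The induction hypothesis then gives a label for that path's image.
\end{itemize}
Next, the extension test in line~11 passes. We have $w_1\le f_1(P')\le\tau(r')$, where the second inequality holds because $P$ is temporal and $f_1$ is the arrival time. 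So a new label $l'$ is generated in line~13. By isotonicity, applied componentwise with the two equivalent formulations for min and max objectives, $f(l')$ weakly dominates $f(P)$. Moreover, $l'$ corresponds to an actual $s$-$v$-path of length at most $k+1$. To see this, record as an auxiliary invariant that every label in $L(\cdot,k)$ is the image of an $s$-$\cdot$-path of length at most $k$. Since $P$ is $(k+1)$-efficient, $f(l')=f(P)$.

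\emph{Survival of the label.} It remains to show that a label with image $f(P)$ is present in $L(v,k+1)$ at the end of iteration~$k$. Line~16 either adds $l'$, or $l'$ is weakly dominated by some label $\tilde l$ already present. In the latter case, $\tilde l$ is again the image of a path of length at most $k+1$, so by efficiency $f(\tilde l)=f(P)$. Afterwards, a label with this image can only be deleted in line~15 by a label that strictly dominates it. That label would also represent a path of length at most $k+1$, contradicting the efficiency of~$P$.

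The main obstacle is this bookkeeping: making the auxiliary invariant (labels are images of real paths of bounded length) precise and ensuring it survives deletions and the copy in line~6. A second subtle point is handling dominated prefixes~$P'$. There, isotonicity is used to transfer weak dominance from $f(P')$ to the extension $f(P)$, rather than requiring that $P'$ itself be efficient.
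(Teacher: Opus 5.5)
Your proof is correct and follows essentially the same route as the paper's. Both argue by induction on $k$ and split on whether the $(k+1)$-efficient path has length at most $k$ (copied in line~6) or exactly $k+1$; in the latter case, a label weakly dominating the prefix is extended (passing line~11 via the earliest-arrival-time objective), isotonicity gives weak dominance, and $(k+1)$-efficiency forces equal images and survival of the label. The differences are only presentational: the paper argues by contradiction and leaves implicit the invariant that labels are images of paths of length at most $k$ and the existence of a $k$-efficient dominating path, both of which you make explicit.
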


\begin{proof}
    Note that, without the first stopping criterion, Algorithm~\ref{list:isotonic_labeL_correcting} performs all iterations $k=0,\dots,K$ before returning the label sets~$L(v,K)$ for all~$v\in V$ in line~18. Moreover, for each $v\in V$ and~$0\leq k\leq K$, each label in~$L(v,k)$ corresponds to an image of an $s$-$v$-path of length at most~$k$. 
    
    We prove the statement by induction on~$k$. For $k=0$, the statement holds since the zero-arcs path~$P=(s)$ is the only path of length~$0$ with start node~$s$, and a corresponding label is added to~$L(s,0)$ in line~2. 

    Now suppose the statement holds for all label sets~$L(v,i)$ with~$v\in V$ and~$0\leq i \leq k$. Assume for contradiction that there exists a node~$v \in V$ and a $(k+1)$-efficient $s$-$v$-path~$P=(s, r_1,v_1,\dots,r_{\ell-1},v_{\ell-1},r_\ell, v)$, $\ell\leq k+1$, such that no label corresponding to its image $(f_1(P), \dots, f_p(P))$ is contained in~$L(v,k+1)$.
    If $\ell\leq k$, then~$P$ is $k$-efficient as well, so a label corresponding to its image is contained in~$L(v,k)$ by induction and, thus, added to~$L(v,k+1)$ in line~6 of iteration~$k$. By $(k+1)$-efficiency of~$P$, this label cannot be deleted from~$L(v,k+1)$ afterwards, which contradicts the choice of~$P$. Thus, we may assume in the following that~$\ell=k+1$.
    Hence, the $s$-$v_{\ell-1}$-path~$P'\coloneqq(s, r_1,v_1,\dots,r_{\ell-1}, v_{\ell-1})$ has length~$\ell-1=k\geq 0$. If there exists a label in~$L(v_{\ell-1},k)$ that corresponds to the image of~$P'$, then, as~$r_\ell\in\delta^+(v_{\ell-1})$ and~$\tau(r_{\ell})\geq \tau(r_{\ell-1})+\lambda(r_{\ell-1})$, a label corresponding to the image of~$P$ would be created and, since~$P$ is $(k+1)$-efficient, this label would be added to and kept in~$L(v,k+1)$, which contradicts the choice of~$P$. Thus, we can assume that~$L(v_{\ell-1},k)$ does not contain a label corresponding to the image of~$P'$. By induction hypothesis, this implies that~$P'$ must be dominated by some $k$-efficient $s$-$v_{\ell-1}$-path~$P''$, and a label~$l''$ corresponding to the image of~$P''$ must be contained in~$L(v_{\ell-1},k)$. Moreover, since the first objective is the earliest arrival time objective, the dominating path~$P''$ must arrive at~$v_{\ell-1}$ no later than~$P'$, which means that~$(P'',r_\ell,v)$ is a temporal $s$-$v$-path for which a corresponding label~$l$ would have been created when extending the label~$l''$ using the arc~$r_\ell$. But since~$P''$ dominates~$P'$, isotonicity of the objectives implies that~$(P'',r_\ell,v)$ weakly dominates~$P=(P',r_\ell,v)$. Since~$(P'',r_\ell,v)$ has length at most~$k+1$ and~$P$ is $(k+1)$-efficient, this implies that~$(P'',r_\ell,v)$ and~$P$ have the same image, for which~$l$ is a corresponding label. By $(k+1)$-efficiency of~$P$, this label would have been added to and kept in the label set~$L(v,k+1)$, which contradicts the choice of~$P$.
\end{proof}

\begin{proposition}\label{prop:iso2}
    Suppose that Algorithm~\ref{list:isotonic_labeL_correcting} is executed without the first stopping criterion (i.e., $L(v,k+1)=L(v,k)$ for all $v\in V$) in line~17. Then, for each~$v\in V$ and each~$k \leq K$, each label in the set~$L(v,k)$ generated by Algorithm~\ref{list:isotonic_labeL_correcting} corresponds to an image of a $k$-efficient $s$-$v$-path.    
\end{proposition}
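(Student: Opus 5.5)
We argue by induction on~$k$, combining two facts: every label in~$L(v,k)$ is the image of some $s$-$v$-path of length at most~$k$, and by Proposition~\ref{prop:iso1} every $k$-nondominated image has a corresponding label in~$L(v,k)$. The target is the converse: no label in~$L(v,k)$ has an image that is dominated by the image of some $s$-$v$-path of length at most~$k$.

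\textbf{Step 1: labels are images of short paths.} As noted in the proof of Proposition~\ref{prop:iso1}, each label in~$L(v,k)$ is created either by copying from~$L(v,k-1)$ in line~6 or by extending a label of~$L(v',k-1)$ along an arc~$r'$ with $\tau(r')\geq w_1$. Induction on~$k$ then shows that each such label equals the image of a temporal $s$-$v$-path of length at most~$k$. The condition $\tau(r')\geq w_1$ guarantees that the extension is temporal, since $w_1$ is the arrival time.

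\textbf{Step 2: the final label set is an antichain.} I would show that at the end of iteration~$k-1$ no label in~$L(u,k)$ is dominated by another label in the same set. Whenever a label~$l'$ is added in lines~14--16, it is not weakly dominated by any current label, and all labels it dominates are removed. Two consequences follow. No later insertion is dominated by an earlier surviving label, because it would have been rejected. No earlier label dominated by a later insertion survives, because it is deleted in line~15. Together with transitivity of dominance, this rules out dominated pairs at the end.

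A small technicality concerns labels copied into~$L(u,k)$ in line~6, which are placed in the set before any insertion. These copies are also subject to the rejection and deletion rules, so the same argument applies to them.

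\textbf{Step 3: conclude by contradiction.} Suppose a label~$l\in L(v,k)$ has image~$y$ that is dominated by the image of some $s$-$v$-path~$Q$ with $|Q|\leq k$. The set of images of $s$-$v$-paths of length at most~$k$ is finite, since there are finitely many such paths. Dominance is a strict partial order on this finite set, so there is a $k$-efficient path~$Q^*$ whose image dominates~$y$: if the image of~$Q$ is itself dominated, move to a dominating image and repeat, which terminates by finiteness and transitivity.

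By Proposition~\ref{prop:iso1}, $L(v,k)$ contains a label corresponding to the image of~$Q^*$. That label dominates~$l$, which contradicts Step~2. Hence every label in~$L(v,k)$ is the image of a path of length at most~$k$ that is not dominated by any such path, that is, of a $k$-efficient $s$-$v$-path.

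\textbf{Main obstacle.} The subtle part is Step~2. Lines~11--16 process labels one by one, and the final set depends on insertion order. The invariant has to be stated carefully: after each execution of lines~14--16, the current~$L(u,k+1)$ contains no pair $l_1,l_2$ with $l_1$ dominated by~$l_2$.

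Preserving this invariant under a single insertion uses transitivity of weak dominance. Suppose the new label~$l'$ is not weakly dominated by anything in the set. Then after deleting every label that~$l'$ dominates, $l'$ does not dominate any remaining label, and no remaining label dominates~$l'$, so the invariant persists. Isotonicity is not needed here; it was already used in Proposition~\ref{prop:iso1}.
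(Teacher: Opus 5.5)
Your proof is correct and follows essentially the same route as the paper: obtain an $s$-$v$-path of length at most $k$ for the label, assume it is dominated and replace the dominator by a $k$-efficient one, use Proposition~\ref{prop:iso1} to get a dominating label in $L(v,k)$, and contradict the rejection/deletion rule in lines~14--16. Your antichain invariant and your finiteness argument make explicit what the paper's final sentence and its ``iterative replacement'' step leave implicit. One small precision: the invariant holds right after the copy in line~6 because $L(u,k)$ is already an antichain by induction on $k$, not because the copies pass through the rejection rule.
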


\begin{proof}
    For $k=0$, the claim holds since the only label added to any label set~$L(v,0)$ is the initial label $(\overline{0}_1, \dots, \overline{0}_p, \none, \none, 1)$ added to $L(s,0)$ in line~2, which corresponds to the image of the $0$-efficient zero-arcs path~$P=(s)$ (the only path of length~$0$ starting at~$s$). So now, let $k \in [K]$ and let $l=(w_1, \dots, w_p, r, x, y)\in L(v,k)$ be a label for some~$v\in V$. We inductively define~$r_k \coloneqq r$ as the arc from which label~$l$ has been obtained, $r_{k-1}$ as the arc from which the $x$-th label in~$L(\alpha(r_k),k-1)$ has been obtained, and so forth until the label $(\overline{0}_1, \dots, \overline{0}_p, \none, \none, 1)\in L(s,0)$ is obtained, which must happen after at most~$k$ steps since the second parameter of the label set decreases by one in each step. By construction, the image of the resulting $s$-$v$-path~$P=(s,r_{k-j},v_{k-j}\dots,r_k, v)$ corresponds to the label~$l$. The path is temporal since, due to line~11 of the algorithm, labels are only extended using further arcs if the first component of the label corresponding to the arrival time at the current node is no larger than the start time of the arc.
    
    It remains to show that~$P$ is not dominated by any other $s$-$v$-path of length at most~$k$. Assume for a contradiction that~$P$ is dominated by such a path~$P'$. Then, by iteratively replacing~$P'$ by a path of length at most~$k$ that dominates it until no such path exists anymore, we can assume that~$P'$ is $k$-efficient. Thus, by Proposition~\ref{prop:iso1}, the set~$L(v,k)$ generated by the algorithm contains a label~$l'$ corresponding to the image of~$P'$. Since the label~$l'$ dominates the label~$l$ corresponding to~$P$ by construction, this means that~$l$ would either have not been added to~$L(v,k)$ if~$l'$ was already in~$L(v,k)$ when~$l$ was created, or would have been deleted from~$L(v,k)$ when~$l'$ was added. This yields the desired contradiction.
\end{proof}

\noindent
The following theorem establishes the correctness of Algorithm~\ref{list:isotonic_labeL_correcting}.

\begin{theorem}\label{theorem:iso_main_th}
    For each node~$v \in V$, the set~$L(v,k)$ returned by Algorithm~\ref{list:isotonic_labeL_correcting} contains a corresponding label for each image of a $K$-efficient $s$-$v$-path, and each label in~$L(v,k)$ corresponds to an image of this type.
\end{theorem}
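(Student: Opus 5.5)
Two cases are distinguished, according to which stopping criterion in line~17 fires. In both, the goal is to transfer Propositions~\ref{prop:iso1} and~\ref{prop:iso2} (which hold for $L(v,K)$ when the algorithm runs without the first criterion) to the label sets actually returned.

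\emph{Case 1: the algorithm stops because $k=K$.} The first criterion never triggered before iteration~$K$. The run is therefore identical to a run without the first criterion, and the returned sets are $L(v,K)$. The statement then follows directly from Proposition~\ref{prop:iso1} (every image of a $K$-efficient $s$-$v$-path has a corresponding label) and Proposition~\ref{prop:iso2} (every label corresponds to such an image).

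\emph{Case 2: the algorithm stops at some $k<K$ because $L(v,k+1)=L(v,k)$ for all $v\in V$.} Consider the hypothetical continuation of the algorithm without the first stopping criterion. The computation in iteration $k+1$ depends only on the sets $L(\cdot,k+1)$. Since these coincide with $L(\cdot,k)$, iteration $k+1$ performs exactly the same extensions and dominance checks as iteration~$k$. Inserting the same candidate labels into a copy of the same set, in the same order, yields the same result, so $L(\cdot,k+2)=L(\cdot,k+1)$. This equality is up to the fresh label numbers $z$; I would note that equality of label sets is meant on images and predecessor structure, and that the renumbering is harmless. By induction, $L(v,j)=L(v,k)$ for all $j\ge k$, in particular $L(v,K)=L(v,k)$. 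Applying Propositions~\ref{prop:iso1} and~\ref{prop:iso2} to the continued run with $j=K$ gives the claim for the returned sets $L(v,k)$. For the predecessor pointers of the returned labels, Proposition~\ref{prop:iso2}'s backtracking path is still available, since the labels are the very ones stored at level~$k$.

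The main obstacle is the stabilization argument in Case~2. One must check that iteration $k+1$ really is a deterministic function of $L(\cdot,k+1)$ alone and reproduces $L(\cdot,k+1)$. The first step of each iteration copies $L(v,k+1)$ into $L(v,k+2)$. The labels $l'$ generated from $L(\cdot,k+1)=L(\cdot,k)$ are then exactly those generated in iteration~$k$, with the same images. Each such $l'$ was either already weakly dominated in $L(u,k+1)$ or inserted there. In the final set $L(u,k+1)$ every generated image is weakly dominated by some label. So when these labels are processed against the set $L(u,k+2)$, which starts equal to the final $L(u,k+1)$, each new $l'$ is weakly dominated by a label present there. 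To make this hold throughout the iteration, I would argue that labels are never deleted from $L(u,k+2)$ during iteration~$k+1$. Deletion only occurs when a newly added label dominates an existing one, and no label is added because each is weakly dominated by the unchanged set. Hence nothing is added or deleted, and $L(\cdot,k+2)=L(\cdot,k+1)$.
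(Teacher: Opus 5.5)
Your proof is correct and follows the paper's argument. Case~1 is immediate from Propositions~\ref{prop:iso1} and~\ref{prop:iso2}, and Case~2 shows that a run without the first stopping criterion would give $L(v,K)=L(v,k)$, so the same propositions apply; your explicit stabilization argument (every regenerated candidate is weakly dominated by the unchanged set, so nothing is added and hence nothing is deleted) supplies detail the paper only asserts via ``subsequent iterations depend only on the label sets produced in the previous iteration.''
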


\begin{proof}
    If Algorithm~\ref{list:isotonic_labeL_correcting} is executed without the first stopping criterion in line~17, the claim follows directly from Propositions~\ref{prop:iso1} and~\ref{prop:iso2}. It therefore remains to justify the correctness of the first stopping criterion. Suppose that, in some iteration $k<K$, the condition $L(v,k+1)=L(v,k)$ holds for all~$v\in V$. Then no label set changes in iteration~$k$. Since subsequent iterations depend only on the label sets produced in the previous iteration, it follows inductively that no further changes occur in any iteration $k+1,\dots,K$. Hence, executing the algorithm without the first stopping criterion would yield $L(v,k)=L(v,K)$ for all~$v\in V$. Consequently, terminating the algorithm already in iteration~$k$ and returning the sets~$L(v,k)$ for all~$v\in V$ yields the correct output as well, which completes the proof.
\end{proof}

After termination of Algorithm~\ref{list:isotonic_labeL_correcting}, the $K$-nondominated images corresponding to the labels in the returned sets~$L(v,k)$, $v\in V$, can be directly obtained as the vectors of the first $p$~components of these labels and, for each such image, a corresponding path can be obtained by using the label entries providing the predecessor arcs and label numbers. Moreover, the proofs imply the following important corollary. It will form the basis for some of the sufficient conditions provided in Section~\ref{sec:unrestricted_path_length} under which Algorithm~\ref{list:isotonic_labeL_correcting} actually solves the general SSMTSPP.

\begin{corollary}\label{cor:unchanged_labels}
Suppose that Algorithm~\ref{list:isotonic_labeL_correcting} terminates in iteration~$k$ because the stopping criterion $L(v,k+1)=L(v,k)$ holds for all~$v\in V$ in line~17. Then, for each~$v\in V$, the returned set~$L(v,k)$ contains a corresponding label for each image of an efficient $s$-$v$-path, and each label in~$L(v,k)$ corresponds to an image of this type.
\end{corollary}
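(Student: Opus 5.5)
The plan is to run the algorithm hypothetically without any stopping criterion and with an arbitrarily large bound, and reduce the corollary to Theorem~\ref{theorem:iso_main_th} (equivalently Propositions~\ref{prop:iso1} and~\ref{prop:iso2}). The key observation, already used in the proof of Theorem~\ref{theorem:iso_main_th}, is that iteration~$k$ computes the sets $L(\cdot,k+1)$ as a deterministic function of the sets $L(\cdot,k)$ alone. This function does not depend on $K$, except that $K$ decides when to stop. Hence, if $L(v,k+1)=L(v,k)$ for all $v$, then running the algorithm with any larger bound $K'\geq k$ and without the first stopping criterion produces $L(v,k')=L(v,k)$ for all $k'\in\{k,\dots,K'\}$ and all $v$. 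The label numbers $z$ differ, but the label sets are the same as sets of images, which is all that matters.

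Fix $v\in V$. First I show that $L(v,k)$ contains a label for the image of every efficient $s$-$v$-path~$P$. Let $K'\coloneqq\max\{k,|P|\}$. Then $P$ is $K'$-efficient: it is not dominated by any $s$-$v$-path at all, so in particular not by one of length at most~$K'$. Proposition~\ref{prop:iso1}, applied to the run with bound~$K'$, puts a label for the image of~$P$ in $L(v,K')=L(v,k)$.

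Conversely, take a label $l\in L(v,k)$. By Proposition~\ref{prop:iso2} it corresponds to the image of some $k$-efficient $s$-$v$-path~$P$. Suppose, for contradiction, that $P$ is dominated by some $s$-$v$-path~$Q$, of arbitrary length, and set $K'\coloneqq\max\{k,|Q|\}$. Now $l\in L(v,K')$, so Proposition~\ref{prop:iso2} for the run with bound $K'$ says that the image of $l$, namely $f(P)$, is the image of a $K'$-efficient path. But $Q$ has length at most $K'$ and dominates that image, a contradiction. So $P$ is efficient.

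The main obstacle is being careful that Propositions~\ref{prop:iso1} and~\ref{prop:iso2} legitimately apply to the hypothetical run with the larger bound~$K'$. They are stated for an arbitrary input bound, so the only thing to check is that the hypothetical run reproduces the actual sets up to iteration~$k$ and then stays constant. This follows by induction on the iteration index from the deterministic dependence of $L(\cdot,i+1)$ on $L(\cdot,i)$. One subtlety is that the order in which labels are processed could in principle affect which of several labels with equal images survives. Equal images are weakly dominated, so at most one label per image is kept. Since we compare images, not label identities, this does not matter.
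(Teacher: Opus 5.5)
Your proposal is correct and follows essentially the route the paper has in mind. The paper gives no separate proof and only says that the proofs imply the corollary. That implication is exactly your argument: the invariance observation from the proof of Theorem~\ref{theorem:iso_main_th} gives $L(v,k)=L(v,K')$ for every $K'\geq k$ in a run without the first stopping criterion, and Propositions~\ref{prop:iso1} and~\ref{prop:iso2} are then applied with a bound $K'$ at least the length of the efficient (respectively, dominating) path under consideration.
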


Furthermore, Proposition~\ref{prop:iso2} directly yields the following result concerning the size of the label sets created during the algorithm:

\begin{corollary}\label{cor:iso_runtime}
    In each iteration~$k$ of Algorithm~\ref{list:isotonic_labeL_correcting}, the size of each label set~$L(v,k)$ is bounded by the number of $k$-nondominated images of $s$-$v$-paths.
\end{corollary}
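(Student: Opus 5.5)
The claim follows directly from Proposition~\ref{prop:iso2}, combined with one observation about the labels: two distinct labels in any set~$L(v,k)$ always have distinct images. First I would prove this observation. Then I would map each label injectively to a $k$-nondominated image, which gives the bound.

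\textbf{Step 1: distinct images.} I would show that at every point of the run, no two labels in the same set $L(u,k+1)$ share an image. New labels enter $L(u,k+1)$ in two ways.
\begin{itemize}
\item Copying in line~6: here $L(u,k+1)$ is initialized as a copy of $L(u,k)$. By induction on~$k$, the labels of $L(u,k)$ have pairwise distinct images, so the copy inherits this.
\item Insertion in lines~14--16: a new label $l'$ is added only if no label $\tilde l\in L(u,k+1)$ weakly dominates it. A label with the same image would weakly dominate~$l'$, so $l'$ is rejected whenever its image is already present.
\end{itemize}
Deletions in line~15 cannot create duplicates. The base case is $L(s,0)$, which holds a single label. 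Hence the invariant holds throughout, in particular for the final sets.

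\textbf{Step 2: injection.} By Proposition~\ref{prop:iso2}, each label in $L(v,k)$ corresponds to the image of a $k$-efficient $s$-$v$-path, i.e., to a $k$-nondominated image. By Step~1, distinct labels give distinct images. So the map from labels to their images is an injection from $L(v,k)$ into the set of $k$-nondominated images of $s$-$v$-paths. This yields $|L(v,k)|$ at most the number of such images.

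\textbf{Main obstacle.} Two points need care, and neither should be hard.
\begin{itemize}
\item Proposition~\ref{prop:iso2} is stated for runs without the first stopping criterion. This stopping rule only truncates the iterations and does not alter the sets computed up to that point, so the proposition still applies to every iteration that is actually executed.
\item The bound should also cover the intermediate content of $L(v,k+1)$ during iteration~$k$, not just the final sets. I would simply state the claim for the sets at the end of each iteration, which is what Proposition~\ref{prop:iso2} describes.
\end{itemize}
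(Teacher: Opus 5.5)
Your proof is correct and takes the paper's route: the paper simply states that Proposition~\ref{prop:iso2} directly yields the bound. Your Step~1 makes explicit the injectivity that this one-line deduction tacitly relies on, namely that labels in one set have pairwise distinct images because line~14 rejects any label whose image is already present, since that label weakly dominates it. Your handling of the first stopping criterion is also sound, as is reading the claim for $L(v,k)$, which is already finalized and left unchanged during iteration~$k$.
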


\section{General Objectives}
\label{sec:generalAlgo}

In this section, we present a label correcting algorithm (Algorithm~\ref{list:general_labeL_correcting}) for the SSMTSPP-MPL in the general case where the objectives are neither required to be monotone nor isotonic. In the absence of isotonicity, labels corresponding to nondominated images may arise exclusively from extending dominated labels. Therefore, dominated labels cannot be discarded in earlier iterations, in contrast to Algorithm~\ref{list:isotonic_labeL_correcting}. 

\begin{lstlisting}[mathescape=true, caption={General Multiobjective Temporal Label Correcting Algorithm } ,label={list:general_labeL_correcting},captionpos=t,abovecaptionskip=-\medskipamount]
INPUT: temporal graph $G=(V,R)$, start node $s\in V$, maximum path
       length $K$, $p\geq 1$ objectives $(M_j, \leq_j, f_j, \oplus_j, \overline{0}_j, \dir_j), j\in [p]$, where
       $(M_1, \leq_1,f_1, \oplus_1, \overline{0}_1, \dir_1)$ is the earliest arrival time objective
OUTPUT: for each $v\in V$, the set of labels corresponding to all 
        $K$-nondominated images of $s$-$v$-paths
1. initialize $L(v,k)\coloneqq[\,]$ for each $v\in V$ and each $k\in \{0,\dots,K\}$
2. add label $(\overline{0}_1, \dots, \overline{0}_p, \none, \none, 1)$ to $L(s,0)$
3. $z \coloneq 1$
4. for $k = 0, \dots, K$ do
5.    for $v\in V$ do
6.        $L(v, k+1) = L(v, k)$
7.    for $v\in V$ with $L(v,k) \neq \emptyset$ do
8.        for $r' \in \delta^{+}(v)$ do
9.            $u \coloneqq \omega(r')$
10.           for $l = (w_1, \dots, w_p, r, x, y) \in L(v, k)$ do
11.               if $\tau(r') \geq w_1$ then
12.                   $z \coloneqq z + 1$
13.                   $l' \coloneqq (w_1\oplus_1 f_1(r'), \dots, w_p\oplus_p f_p(r'), r', y, z)$
14.               if $\nexists \tilde{l} \in L(u, k+1)$ with the same image as $l'$ then
15.                   add $l'$ to $L(u, k+1)$
16.   if $L(v,k+1) = L(v,k)$ for all $v \in V$ or $k=K$ then
17.       for $v \in V$ do
18.           delete all dominated labels in $L(v, k)$
19.       return $L(v,k)$ for all $v \in V$
\end{lstlisting}

The proof of correctness of Algorithm~\ref{list:general_labeL_correcting} is similar to the proofs of Proposition~\ref{prop:iso1} and Theorem~\ref{theorem:iso_main_th}.
The key difference is that dominated labels are removed only at the end of the final iteration~$k'\leq K$, immediately before termination. 
Consequently, in all iterations~$k<k'$, each set~$L(v,k)$ generally contains not only labels corresponding to images of $k$-efficient paths, but one corresponding label for each possible image of an $s$-$v$-path of length at most~$k$.

\begin{proposition}\label{prop:general_L}
    Suppose that Algorithm~\ref{list:general_labeL_correcting} is executed without the first stopping criterion (i.e. $L(v, k+1) = L(v,k)$ for all $v\in V$) in line 16. Then, for each~$v\in V$ and each~$k\leq K$, the label set~$L(v,k)$ right before dominated labels are deleted in line~18 of the final iteration contains one corresponding label for each image of an $s$-$v$-path of length at most~$k$.
\end{proposition}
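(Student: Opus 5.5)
We prove the statement by induction on~$k$, in the spirit of Proposition~\ref{prop:iso1}. The claim has two parts, and we show both. \emph{Soundness}: every label in~$L(v,k)$ corresponds to the image of some $s$-$v$-path of length at most~$k$. \emph{Completeness}: for every such image, exactly one corresponding label is contained in~$L(v,k)$.

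Uniqueness is immediate from line~14. A label is only added to~$L(u,k+1)$ if no label with the same image is already there, and no label is ever deleted before line~18 of the final iteration. Since each set starts either empty or as the copy of~$L(v,k)$ in line~6, distinct labels of one set always carry distinct images.

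Soundness follows as in the first part of the proof of Proposition~\ref{prop:iso2}. Starting from a label of~$L(v,k)$, we backtrack along the stored predecessor arcs and label numbers. This reaches the initial label in~$L(s,0)$ after at most~$k$ steps and produces an $s$-$v$-path whose image is the label's image. The path is temporal because of the check $\tau(r')\geq w_1$ in line~11. Isotonicity is used nowhere in this argument.

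For completeness, the base case $k=0$ holds because the zero-arcs path~$(s)$ is the only path of length~$0$ starting at~$s$, and line~2 creates its label. For the inductive step, let $P$ be an $s$-$v$-path of length $\ell\leq k+1$. We consider two cases.

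\begin{itemize}
\item If $\ell\leq k$, then by induction~$L(v,k)$ contains a label with the image of~$P$. Line~6 copies this label into~$L(v,k+1)$, and it is never removed afterwards.
\item If $\ell=k+1$, write $P=(P',r_\ell,v)$ with~$P'$ an $s$-$v_{\ell-1}$-path of length~$k$. By induction,~$L(v_{\ell-1},k)$ contains a label~$l''$ whose image equals $(f_1(P'),\dots,f_p(P'))$. Its first component~$w_1$ is the arrival time of~$P'$, which is at most~$\tau(r_\ell)$ because~$P$ is temporal. Hence, when iteration~$k$ processes~$r_\ell$ and~$l''$, line~13 creates a label~$l'$ with image $(f_j(P')\oplus_j f_j(r_\ell))_{j\in[p]}=(f_j(P))_{j\in[p]}$. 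By line~14, either~$l'$ is added to~$L(v,k+1)$, or that set already contains a label with this image.
\end{itemize}

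The key point is that the argument only ever uses the exact image of the inductively available label, never a dominating one. This is why no isotonicity is needed. It also shows why the ``same image'' test is the right one: it keeps one representative per image and loses nothing.

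The only step requiring care is the bookkeeping of when labels are added to~$L(v,k+1)$, namely during iteration~$k$ only, and of the fact that deletion happens solely in line~18 of the final iteration. This ensures that ``right before line~18'' the set~$L(v,k)$ is exactly the union of what the induction produced. A minor point is the case where~$l''$ fails the test in line~11: then line~14 would be executed with an undefined or stale~$l'$. I would read lines~12--15 as all guarded by that condition, which is clearly the intended meaning.
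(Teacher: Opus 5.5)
Your proof is correct and follows essentially the same route as the paper: soundness via backtracking as in Proposition~\ref{prop:iso2}, then induction on~$k$. The inductive step splits into the cases $\ell\le k$ (the label is copied in line~6) and $\ell=k+1$ (extend the exact-image label of~$P'$ along~$r_\ell$). Your additions are sensible refinements of points the paper's proof glosses over: the uniqueness argument from line~14, the observation that line~14 may find an existing label with the same image instead of adding~$l'$, and reading lines~12--15 as guarded by line~11.
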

\begin{proof} 
    As noted for Algorithm~\ref{list:isotonic_labeL_correcting} in the proof of Proposition~\ref{prop:iso1}, also Algorithm~\ref{list:general_labeL_correcting} has the property that each label in the generated set~$L(v,k)$ for~$v\in V$ and~$k\leq K$ corresponds to an image of an $s$-$v$-path of length at most~$k$.

    We now prove the statement by induction on~$k$. 
    For $k=0$, the statement holds since the zero-arcs path~$P=(s)$ is the only path of length~$0$ with start node~$s$, and a corresponding label is added to~$L(s,0)$ in line~2. 

    Now suppose the statement holds for all label sets~$L(v,i)$ with~$v\in V$ and $0\leq i\leq k$. Assume for contradiction that there exists a node~$v\in V$ and an $s$-$v$-path~$P=(s,r_1,v_1,\dots,r_{\ell-1},v_{\ell-1},r_\ell,v)$ of length~$\ell\leq k+1$ such that no label corresponding to its image is contained in~$L(v,k+1)$ right before dominated labels are deleted in line~18 of the final iteration. If~$P$ has length at most~$k$, then, by induction, the set~$L(v,k)$ contains a label~$l$ corresponding to its image, and~$l$ would have been copied to~$L(v,k+1)$ in line~6 of iteration~$k$. Thus, we may assume in the following that~$P$ has length~$k+1$. Hence, the $s$-$v_{\ell-1}$-path $P'\coloneqq (s,r_1,v_1,\dots,r_{\ell-1},v_{\ell-1})$ has length $\ell-1=k\geq0$. 
    Thus, by induction, there exists a label in~$L(v_{\ell-1},k)$ that corresponds to the image of~$P'$. Since $r_{\ell}\in\delta^+(v_{\ell-1})$ and~$\tau(r_{\ell})\geq \tau(r_{\ell-1})+\lambda(r_{\ell-1})$, a label~$l$ corresponding to the image of~$P$ is created by extending this label, and~$l$ is added to~$L(v,k+1)$. This contradicts the choice of~$P$ and finishes the proof.
\end{proof}

\noindent
The following theorem establishes the correctness of Algorithm \ref{list:general_labeL_correcting}.

\begin{theorem}\label{theorem:general_main}
    For each node~$v\in V$, the label set~$L(v,k)$ returned by Algorithm~\ref{list:general_labeL_correcting} contains a corresponding label for each image of a $K$-efficient $s$-$v$-path, and each label in~$L(v,k)$ corresponds to an image of this type.
\end{theorem}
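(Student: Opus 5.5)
We first treat the case where Algorithm~\ref{list:general_labeL_correcting} runs without the first stopping criterion in line~16, and then show separately that this criterion is correct. Without early stopping, the algorithm performs all iterations $k=0,\dots,K$ and returns the sets $L(v,K)$ after line~18.

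By Proposition~\ref{prop:general_L}, right before line~18 of the final iteration, $L(v,K)$ contains exactly one corresponding label for each image of an $s$-$v$-path of length at most~$K$. The word ``exactly'' rests on two facts already established. First, every label in $L(v,K)$ corresponds to such a path, which is the remark opening the proof of Proposition~\ref{prop:general_L}. Second, line~14 never inserts two labels with the same image. Line~18 then deletes every label whose image is dominated by another label's image in the same set.

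Both inclusions follow directly. Let $P$ be a $K$-efficient $s$-$v$-path. Its image occurs among the labels of $L(v,K)$. That label is deleted only if some other label dominates it, and any such label comes from an $s$-$v$-path of length at most~$K$ that would dominate~$P$, which is impossible. Conversely, suppose a label $l$ survives line~18, and let $P$ be the path of length at most~$K$ whose image it carries. If $P$ were dominated by some $s$-$v$-path $P'$ of length at most~$K$, then the image of $P'$ would appear as a label in $L(v,K)$ and would dominate~$l$, so $l$ would have been deleted. Hence $P$ is $K$-efficient.

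It remains to justify the first stopping criterion, arguing as in the proof of Theorem~\ref{theorem:iso_main_th}. Suppose that in iteration $k<K$ we have $L(v,k+1)=L(v,k)$ for all $v$. Iteration $k+1$ depends only on the sets $L(\cdot,k+1)$. Since these equal $L(\cdot,k)$, iteration $k+1$ computes exactly the same extensions as iteration~$k$, where all resulting images were already present. By line~14, no new labels are added, and by induction $L(v,K)=L(v,k)$ as image sets for all~$v$.

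The main point to check here is that, unlike in Algorithm~\ref{list:isotonic_labeL_correcting}, no deletions occur before termination. Equality of the sets as images therefore means that the full set of images of paths of length at most~$K$ coincides with the images in $L(v,k)$. Applying line~18 to $L(v,k)$ then yields the same image set as applying it to $L(v,K)$, and the argument from the no-stopping case transfers unchanged. We should also note that the label numbers and predecessors may differ, but each surviving label still encodes a valid path of length at most $k\le K$, so the claim holds.
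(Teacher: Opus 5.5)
Your proposal is correct and follows essentially the same route as the paper. You first apply Proposition~\ref{prop:general_L} to the run without early stopping, and then argue that once no labels are added in some iteration $k<K$, none are added afterwards, so $L(v,k)$ coincides with $L(v,K)$ before line~18; your explicit check of both inclusions just spells out what the paper compresses into ``exactly the dominated labels are deleted in line~18.''
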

\begin{proof}
    If Algorithm~\ref{list:general_labeL_correcting} is executed without the first stopping criterion in line~16, Proposition~\ref{prop:general_L} shows that, for each~$v\in V$, the label set~$L(v,k)$ right before dominated labels are deleted in line~18 of the final iteration~$k$ contains one corresponding label for each image of an $s$-$v$-path of length at most~$k$. Since exactly the dominated labels are deleted from~$L(v,k)$ in line~18 before returning the set, the claim holds when the first stopping criterion in line 16 is omitted. It therefore remains to justify the correctness of this stopping criterion. Suppose that, in some iteration $k<K$, the condition $L(v,k+1)=L(v,k)$ holds for all~$v\in V$. Then no new labels are added to any label set in iteration~$k$. Since subsequent iterations depend only on the label sets produced in the previous iteration, it follows inductively that no new labels are added to any label set in iterations $k+1,\dots K$. Hence, executing the algorithm without the first stopping criterion would yield $L(v,k)=L(v,K)$ for all~$v\in V$ before the deletion of dominated labels from~$L(v,K)$ in the final iteration. Consequently, terminating the algorithm already in iteration~$k$ and deleting dominated labels from~$L(v,k)$ instead yields the correct output as well, which completes the proof.
\end{proof}

Since dominated labels are removed only at the end of the final iteration, the size of~$L(v,k)$ can no longer be bounded by the number of $k$-nondominated images of $s$-$v$-paths as in Corollary~\ref{cor:iso_runtime}; in fact, it may be exponential in this number. Nevertheless, by arguments analogous to those for Algorithm~\ref{list:isotonic_labeL_correcting}, the following statement holds for Algorithm~\ref{list:general_labeL_correcting}:

\begin{corollary}\label{cor:unchanged_labels_2}
Suppose that Algorithm~\ref{list:general_labeL_correcting} terminates in iteration~$k$ because the stopping criterion $L(v,k+1)=L(v,k)$ holds for all~$v\in V$ in line~16. Then, for each~$v\in V$, the returned set~$L(v,k)$ contains a corresponding label for each image of an efficient $s$-$v$-path, and each label in~$L(v,k)$ corresponds to an image of this type.
\end{corollary}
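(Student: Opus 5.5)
The plan is to show that if the stopping criterion fires in iteration~$k$, then the label sets computed by Algorithm~\ref{list:general_labeL_correcting} would stay unchanged in every later iteration, for an arbitrarily large bound, not just up to~$K$. Then I combine this with Proposition~\ref{prop:general_L}. Throughout, I write $L(v,k)$ for the set before line~18 deletes dominated labels.

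First, I note that the behaviour of iteration~$i$ depends only on the sets $L(\cdot,i)$ and on the fixed graph and objectives, and not on the value of~$K$. More precisely, $L(\cdot,i+1)$ is obtained from $L(\cdot,i)$ by copying the sets and then adding one label for each new image of a feasible one-arc extension. The label numbers $x,y,z$ change, but the images do not, and the images are all that matter. So if $L(v,k+1)=L(v,k)$ for all~$v$, induction on~$i$ shows that a run with any bound $K'\geq k$ (without the first stopping criterion) would give sets $L(v,i)$ with the same images as $L(v,k)$ for all $i\geq k$. This is the same argument used in the proof of Theorem~\ref{theorem:general_main}; I will state it carefully so that it does not rely on~$K$.

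Next, I fix $v\in V$ and an efficient $s$-$v$-path~$P$ of some finite length~$\ell$. I choose $K'\geq\max\{\ell,k\}$ and apply Proposition~\ref{prop:general_L} to the run with bound~$K'$. Its set $L(v,K')$ contains a label for the image of every $s$-$v$-path of length at most~$K'$, in particular for the image of~$P$. By the previous step, $L(v,K')$ and $L(v,k)$ have the same images, so $L(v,k)$ contains a label corresponding to $f(P)$ before the deletion in line~18. This label is not deleted, because a dominating label would correspond to an $s$-$v$-path dominating~$P$: every label corresponds to the image of some actual path, as noted at the start of the proof of Proposition~\ref{prop:general_L}. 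That contradicts the efficiency of~$P$. Together with the copying argument, this also shows that the images in $L(v,k)$ are exactly the images of all $s$-$v$-paths, of any length.

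Conversely, I take a label~$l$ that survives line~18. Its image is the image of some $s$-$v$-path~$P$. Suppose $P$ were dominated by some $s$-$v$-path~$Q$ of arbitrary length. By the previous paragraph, the image of~$Q$ is represented by a label in $L(v,k)$, and that label dominates~$l$, so $l$ would have been deleted. Hence $P$ is efficient. The main care point is the first step: I must make sure the stabilization argument applies to runs with larger bounds, which requires that the stopping test compares images and not label identifiers. Since line~6 copies the sets, equality means that no new image was added, so this holds.
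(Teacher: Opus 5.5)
Your proof is correct and follows essentially the route the paper intends, which it only sketches as ``arguments analogous to those for Algorithm~\ref{list:isotonic_labeL_correcting}''. That route extends the stabilization argument from the proof of Theorem~\ref{theorem:general_main} to an arbitrary bound $K'\geq k$, applies Proposition~\ref{prop:general_L} to that run, and then uses the final deletion in line~18 to keep exactly the images of efficient paths. You also spell out two details the paper leaves implicit: that this stabilization does not depend on $K$, and that equality in line~16 means no new image was added.
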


Note that, for monotone but non-isotonic objectives, the same general algorithm is required since monotonicity alone does not justify discarding dominated labels in earlier iterations. 
Hence, all possible images must be retained until immediately before termination.

\section{The SSMTSPP with Unrestricted Path Lengths}\label{sec:unrestricted_path_length}

Example~\ref{example:exp_path_length} in Section~\ref{sec:ProblemDef} shows that temporal paths of arbitrarily large finite length may need to be considered in the SSMTSPP, even when all objectives are isotonic. Consequently, Algorithms~\ref{list:isotonic_labeL_correcting} and~\ref{list:general_labeL_correcting} can, in general, solve the problem only when a maximum admissible path length~$K$ is imposed. In Section~\ref{sec:sufficient_conds}, we establish several sufficient conditions under which no such bound is required. Under these conditions, Algorithm~\ref{list:isotonic_labeL_correcting} correctly solves the SSMTSPP for isotonic objectives and Algorithm~\ref{list:general_labeL_correcting} does so for arbitrary objectives when executed for a sufficient, precomputable number of iterations. Section~\ref{sec:additive_objectives} then considers the specific isotonic setting of \emph{rational additive objectives}, which generalizes the classical additive cost setting, and shows that executing Algorithm~\ref{list:isotonic_labeL_correcting} for~$m=|R|$ iterations suffices to solve the SSMTSPP.

\subsection{General Sufficient Conditions}\label{sec:sufficient_conds}

A first sufficient condition under which Algorithms~\ref{list:isotonic_labeL_correcting} and~\ref{list:general_labeL_correcting} find labels for all images of efficient paths of arbitrary length relates to the existence of reachable zero-duration cycles:

\begin{theorem}\label{theorem:no_cycles}
Suppose that the temporal graph~$G$ contains no zero-duration cycles reachable from~$s$.  
Then, for isotonic objectives, executing Algorithm~\ref{list:isotonic_labeL_correcting} for $m$~iterations produces, for each $v\in V$, a label set~$L(v,k)$ such that
\begin{enumerate}
    \item for every efficient $s$-$v$-path, there exists a label in~$L(v,k)$ corresponding to its image, and
    \item every label in~$L(v,k)$ corresponds to the image of some efficient $s$-$v$-path.
\end{enumerate}
Moreover, Algorithm~\ref{list:general_labeL_correcting} with $K\coloneqq m$ achieves the same guarantee for arbitrary objectives.
\end{theorem}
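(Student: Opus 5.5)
The plan is to reduce the statement to Theorems~\ref{theorem:iso_main_th} and~\ref{theorem:general_main} with $K\coloneqq m$. The key step is to show that, when no zero-duration cycle is reachable from~$s$, every temporal $s$-$v$-path uses each arc at most once and hence has length at most~$m$. Once this holds, the set of all $s$-$v$-paths coincides with the set of $s$-$v$-paths of length at most~$m$. Consequently, efficiency and $m$-efficiency coincide, and so do nondominated and $m$-nondominated images. The two enumerated claims then follow verbatim from Theorem~\ref{theorem:iso_main_th} for Algorithm~\ref{list:isotonic_labeL_correcting} (isotonic case) and from Theorem~\ref{theorem:general_main} for Algorithm~\ref{list:general_labeL_correcting} (general case). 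For the isotonic algorithm, "executing for $m$ iterations" means running it with $K=m$. An early stop through the first stopping criterion is already covered by Theorem~\ref{theorem:iso_main_th}, and also by Corollary~\ref{cor:unchanged_labels}.

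To prove the length bound, let $P=(v_0,r_1,v_1,\dots,r_k,v_k)$ be a temporal $s$-$v$-path with $v_0=s$, and suppose that some arc occurs twice, say $r_i=r_j$ with $i<j$. Consider the subpath $C=(v_{i-1},r_i,\dots,r_{j-1},v_{j-1})$. Since $\alpha(r_j)=\alpha(r_i)$, we have $v_{j-1}=v_{i-1}$, so $C$ is a cycle with $j-i\ge1$ arcs. The temporal constraints give the chain
\[
\tau(r_i)\le\tau(r_i)+\lambda(r_i)\le\tau(r_{i+1})\le\dots\le\tau(r_{j-1})+\lambda(r_{j-1})\le\tau(r_j)=\tau(r_i),
\]
using $\lambda\ge0$ throughout. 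Hence every inequality in the chain is an equality. In particular, the duration $\tau(r_{j-1})+\lambda(r_{j-1})-\tau(r_i)$ of $C$ is zero, so $C$ is a zero-duration cycle. It is reachable from~$s$: if $i\ge2$, take the prefix $(v_0,r_1,\dots,r_{i-1},v_{i-1})$ as the witness path, since it arrives by time $\tau(r_i)$. This contradicts the hypothesis, so no arc repeats and $|P|\le m$.

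The main obstacle is the degenerate case $i=1$, where the cycle starts at $s$ itself and the witness path in Definition~\ref{def:pathseq} would have to be the zero-arcs path $(s)$. The definition as written refers to a last arc $r'_k$ and so formally requires at least one arc. I would handle this by reading the definition as also admitting $(s)$, for which the arrival-time condition is vacuous; this is clearly the intended meaning, since such a cycle can be traversed starting from $s$ at any time. I would state this convention explicitly in the proof. The remaining steps are direct invocations of the earlier theorems.
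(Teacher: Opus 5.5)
Your proposal is correct and follows the paper's proof: a repeated arc yields a reachable zero-duration cycle, so every $s$-$v$-path has length at most $m$, efficiency coincides with $m$-efficiency, and Theorems~\ref{theorem:iso_main_th} and~\ref{theorem:general_main} with $K=m$ give the claim. Your degenerate case $i=1$ needs no reading convention: for every $i$, the prefix $(v_0,r_1,\dots,r_{j-1},v_{j-1})$ is a valid reachability witness, since it has $j-1\geq 1$ arcs, ends at $v_{j-1}=v_{i-1}$, and arrives by $\tau(r_j)=\tau(r_i)$.
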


\begin{proof}
    It suffices to show that no $s$-$v$-path of length greater than~$m$ exists for any node~$v\in V$. Any such path would traverse some temporal arc~$r\in R$ more than once, implying the existence of a zero-duration cycle reachable from~$s$, namely the $\alpha(r)$–$\alpha(r)$-subpath between two consecutive traversals of~$r$. This contradicts the assumption, and the claim follows.
\end{proof}

Note that the condition that the temporal graph~$G$ contains no zero-duration cycles reachable from~$s$ can be verified in polynomial time.\footnote{First, compute the earliest arrival time for every node~$v\in V$ with the approach from~\cite{wu2016efficient}. Then, for each timestamp~$t$ for which an arc~$r\in R$ with $\tau(r)=t$ and $\lambda(r)=0$ exists, construct the subgraph induced by all nodes whose earliest arrival time is at most~$t$ and by all zero traversal time arcs with start time~$t$. There are at most $m$ such subgraphs, each computable in polynomial time. Finally, perform a depth-first search in each subgraph to check whether it contains a cycle.}

Another case where the argumentation from the above proof can be used is when a positive \emph{minimum waiting time}~$\Delta(v)\in\mathbb{Q}_{>0}$ is imposed at each node~$v\in V$, a constraint that is common in temporal network settings (see, e.g.,~\cite{bentert2020efficient}). In this case, the condition that $\tau(r_i) + \lambda(r_i) \leq \tau(r_{i+1})$ for each~$i\in [k-1]$ must hold for a temporal path~$P=(v_0,r_1,v_1,\dots,r_k, v_k)$ is replaced by $\tau(r_i) + \lambda(r_i) \leq \tau(r_{i+1})-\Delta(v_i)$ for each~$i\in [k-1]$ to enforce the minimum waiting time at each intermediate node of the path. This condition can be incorporated directly into Algorithms~\ref{list:isotonic_labeL_correcting} and~\ref{list:general_labeL_correcting} by replacing line~11 by the following to ensure that labels are only extended if the required minimum waiting time is achieved at the current node~$v$:

\begin{lstlisting}[mathescape=true, caption={} ,label={list:waiting_time_constraints},captionpos=t,abovecaptionskip=-\medskipamount]
11.               if $\tau(r') \geq w_1 + \Delta(v)$ then
\end{lstlisting}

It is easy to see that the proofs of correctness for both algorithms generalize directly to the modified versions. Moreover, no arc can ever be traversed twice in the presence of a positive minimum waiting time at each node, so we obtain the following corollary:

\begin{corollary}\label{cor:waiting_time_constraints}
Suppose that a positive minimum waiting time~$\Delta(v)\in\mathbb{Q}_{>0}$ is imposed at each node~$v\in V$.  
Then, for isotonic objectives, executing Algorithm~\ref{list:isotonic_labeL_correcting} for $m$~iterations produces, for each $v\in V$, a label set~$L(v,k)$ such that
\begin{enumerate}
    \item for every efficient $s$-$v$-path, there exists a label in~$L(v,k)$ corresponding to its image, and
    \item every label in~$L(v,k)$ corresponds to the image of some efficient $s$-$v$-path.
\end{enumerate}
Moreover, Algorithm~\ref{list:general_labeL_correcting} with $K\coloneqq m$ achieves the same guarantee for arbitrary objectives.
\end{corollary}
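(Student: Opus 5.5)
The plan is to reduce Corollary~\ref{cor:waiting_time_constraints} to the same argument as in the proof of Theorem~\ref{theorem:no_cycles}: show that, under positive minimum waiting times, every temporal $s$-$v$-path has length at most~$m$. Then every efficient path is $m$-efficient, and conversely. Correctness of the modified algorithms for the bounded variant, which the text asserts carries over directly, then yields both claimed properties.

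First, I will confirm that the correctness statements transfer to the modified line~11. In Propositions~\ref{prop:iso1}, \ref{prop:iso2} and~\ref{prop:general_L}, the temporal feasibility of an extension is used in exactly two places.
\begin{enumerate}
\item A label corresponding to a path~$P'$ is extended along~$r_\ell$ precisely when $(P',r_\ell,v)$ is a (now waiting-time-respecting) temporal path.
\item In Proposition~\ref{prop:iso1}, a dominating path~$P''$ arrives no later than~$P'$, so $\tau(r_\ell)\geq f_1(P'')+\Delta(v_{\ell-1})$ still holds and $(P'',r_\ell,v)$ remains feasible.
\end{enumerate}
Both points hold verbatim when $w_1$ is replaced by $w_1+\Delta(v)$. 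So Theorems~\ref{theorem:iso_main_th} and~\ref{theorem:general_main} remain valid for the modified algorithms.

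Second, I will show that no arc appears twice on a temporal path. Suppose $r_i=r_j$ with $i<j$. Along the path, start times are strictly increasing: for each intermediate step, $\tau(r_{t+1})\geq \tau(r_t)+\lambda(r_t)+\Delta(v_t)>\tau(r_t)$, since $\lambda\geq0$ and $\Delta>0$. Hence $\tau(r_j)>\tau(r_i)$, a contradiction. Consequently every temporal path has at most $m$ arcs. In particular, the set of $s$-$v$-paths coincides with the set of $s$-$v$-paths of length at most~$m$, so efficiency and $m$-efficiency coincide.

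Finally, I will combine the two steps. Algorithm~\ref{list:general_labeL_correcting} with $K\coloneqq m$ returns labels for exactly the $m$-nondominated images, which are the nondominated images. For Algorithm~\ref{list:isotonic_labeL_correcting} executed for $m$ iterations, the same conclusion follows, and early termination is covered by Theorem~\ref{theorem:iso_main_th}. I expect the only step needing care to be the first: checking that the isotonicity argument in Proposition~\ref{prop:iso1} survives the node-dependent shift $\Delta(v)$. This works because the shift depends only on the node and not on the label, so the earliest-arrival comparison between $P''$ and $P'$ still implies feasibility.
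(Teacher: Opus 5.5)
Your proposal is correct and takes essentially the same route as the paper. The paper likewise says that the correctness proofs carry over to the modified line~11 and that positive waiting times prevent any arc from being traversed twice, so that, as in the proof of Theorem~\ref{theorem:no_cycles}, every temporal $s$-$v$-path has length at most~$m$ and efficiency coincides with $m$-efficiency. Your strict-increase argument for start times and your check of the isotonicity step in Proposition~\ref{prop:iso1} make explicit what the paper calls easy to see.

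One minor caveat comes from the paper's modification rather than from your reasoning. Your ``precisely when'' claim fails for the initial zero-arcs label at~$s$: the modified line~11 then demands $\tau(r')\geq\Delta(s)$, although $(s,r',\omega(r'))$ has no intermediate node and is therefore a valid temporal path. The fix is to exempt that label from the shift, or to require waiting at~$s$ in the path definition.
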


Note that, in the situations described in Theorem~\ref{theorem:no_cycles} and Corollary~\ref{cor:waiting_time_constraints}, a nondominated image may arise exclusively from paths of length greater than~$n$, as a positive-duration temporal cycle might need to be traversed once to obtain it. Consequently, in contrast to the static case with additive cost objectives, $m$ rather than~$n$ must be used as the maximum number of iterations in the algorithms.

While the previous results use the temporal graph structure to bound the required number of iterations of the algorithms, the following proposition instead relies on the number of values that the objectives can take.

\begin{restatable}{proposition}{PropositionKDistinctImages}\label{prop:K_distinct_images}
Suppose that, for each node~$v\in V$, at most~$\kappa$ distinct images of $s$-$v$-paths exist.  
Then, for isotonic objectives, the stopping criterion $L(v,k+1)=L(v,k)$ for all~$v\in V$ in line~17 of Algorithm~\ref{list:isotonic_labeL_correcting} is satisfied after at most~$m\kappa$ iterations and the algorithm produces, for each $v\in V$, a label set~$L(v,k)$ such that
\begin{enumerate}
    \item for every efficient $s$-$v$-path, there exists a label in~$L(v,k)$ corresponding to its image, and
    \item every label in~$L(v,k)$ corresponds to the image of some efficient $s$-$v$-path.
\end{enumerate}
Moreover, Algorithm~\ref{list:general_labeL_correcting} satisfies the stopping criterion $L(v,k+1)=L(v,k)$ for all~$v\in V$ in line~16 after at most~$m\kappa$ iterations and achieves the same guarantee for arbitrary objectives.
\end{restatable}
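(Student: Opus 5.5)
We aim to show that the stopping criterion must fire within $m\kappa$ iterations; correctness then follows from Corollary~\ref{cor:unchanged_labels} for Algorithm~\ref{list:isotonic_labeL_correcting} and Corollary~\ref{cor:unchanged_labels_2} for Algorithm~\ref{list:general_labeL_correcting}. The natural potential is the number of distinct images stored in the label sets. Every label in $L(v,k)$ corresponds to the image of some $s$-$v$-path (the remark at the start of the proof of Proposition~\ref{prop:iso1}). Each set contains at most one label per image: line~14 of Algorithm~\ref{list:general_labeL_correcting} enforces this explicitly, and line~14 of Algorithm~\ref{list:isotonic_labeL_correcting} rejects any label weakly dominated by, in particular equal in image to, an existing one. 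Hence $|L(v,k)|\le\kappa$ for all $v,k$.

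For Algorithm~\ref{list:general_labeL_correcting} the argument is short. Labels are never deleted before the final step, and $L(v,k+1)\supseteq L(v,k)$ by line~6. Every iteration in which the stopping criterion fails therefore adds at least one new image at some node. The total number of images over all nodes is at most $n\kappa$, so the criterion fires after at most $n\kappa\le m\kappa$ iterations. This uses that every node other than $s$ that carries a label has an ingoing arc, so without loss of generality $n-1\le m$; if needed, we simply count over nodes reachable from $s$.

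For Algorithm~\ref{list:isotonic_labeL_correcting}, labels can also be deleted, so monotone growth fails and this is the main obstacle. Instead we exploit Propositions~\ref{prop:iso1} and~\ref{prop:iso2}: $L(v,k)$ corresponds exactly to the set $N_k(v)$ of $k$-nondominated images. We then show that an image set can change only finitely often.
\begin{itemize}
\item Let $I_k(v)$ denote the set of images realized by $s$-$v$-paths of length at most $k$. These sets are nested and contained in a set of size at most $\kappa$.
\item Hence $I_k(v)$ strictly grows at most $\kappa-1$ times per node, and $N_k(v)$, being the nondominated subset of $I_k(v)$, changes only when $I_k(v)$ changes.
\item Whenever $I_k(v)=I_{k+1}(v)$ for all $v$ at some $k$, the sets $N$ are unchanged in that iteration, so the criterion fires.
\end{itemize}
The step needing care is to argue that stagnation of $I$ across all nodes occurs within the stated bound. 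If $I_k(v)=I_{k+1}(v)$ for all $v$, then every path of length $k+1$ has an image equal to that of some shorter path. In general this does not immediately imply $I_{k+2}=I_{k+1}$, because images do not determine extensions without additional structure. Isotonicity does not help here either, since it concerns order, not equality. We therefore instead bound the counts. The set $\bigcup_v I_k(v)$ strictly increases whenever the label sets change, so the number of iterations with change is at most $n\kappa\le m\kappa$. This is valid because a change in some $N_k(v)$ forces a strict increase of $I_k(v)$.

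The remaining concern is whether the criterion might fail in an iteration while $I$ does not grow. This cannot happen, since $N_{k+1}=N_k$ whenever $I_{k+1}=I_k$. Such an iteration would therefore trigger the stop, so the stop occurs within at most $n\kappa$ iterations. The stated bound $m\kappa$ is then a loose but valid form of this estimate.
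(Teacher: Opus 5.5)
Your argument is correct in substance, but it follows a different route from the paper.

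\textbf{The paper's route.} It argues by contradiction along a single path. If the criterion still fails in iteration $m\kappa$, the newly added label yields an $s$-$v$-path $P$ of length greater than $m\kappa$ whose image is attained by no path of length at most $m\kappa$. Some arc $r$ then occurs at least $\kappa+1$ times in $P$. The prefixes of $P$ ending right after these traversals must have pairwise distinct images at $\omega(r)$, since otherwise cutting out the piece between two of them gives a shorter path with the same image.

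\textbf{Your route.} You use a global potential: the number of pairs $(v,y)$ with $y\in I_k(v)$, which for Algorithm~\ref{list:general_labeL_correcting} is simply the number of stored labels. This potential is nondecreasing, at most $\kappa$ per reachable node, and strictly increases in every iteration in which the criterion fails. For Algorithm~\ref{list:isotonic_labeL_correcting} this rests on Propositions~\ref{prop:iso1} and~\ref{prop:iso2} together with the fact that $N_k(v)$ is determined by $I_k(v)$.

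\textbf{Comparison.} Your argument is more elementary and needs no cut-and-paste. It gives a bound in terms of the number of reachable nodes, which is usually sharper than $m\kappa$. The paper's argument in exchange gives the structural fact that every image is attained by a path using each arc at most $\kappa$ times.

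\textbf{Steps that need fixing.}

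First, from $n'-1\le m$, with $n'$ the number of nodes reachable from $s$, you cannot conclude $n'\kappa\le m\kappa$. Your count only gives at most $n'\kappa-1\le (m+1)\kappa-1$ failing iterations. A case distinction repairs this:
\begin{itemize}
\item If $s$ has no ingoing arc, then $I_k(s)=\{(\overline{0}_1,\dots,\overline{0}_p)\}$ for all $k$, and at most $m$ further nodes are reachable.
\item Otherwise, at most $m-1$ further nodes are reachable.
\end{itemize}
In both cases the potential is at most $1+m\kappa$, so at most $m\kappa$ iterations fail.

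Second, $\bigcup_v I_k(v)$ must be read as a disjoint union of pairs $(v,y)$. Otherwise equal image vectors at different nodes would be merged and the union need not grow.

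Third, you should say why $N_{k+1}(v)=N_k(v)$ gives equality of label sets and not just of images. An old label is deleted only if a new label with a dominating image enters, which would put a dominating image into $I_{k+1}(v)$. A surviving new label would carry an image already represented, so it would have been rejected in line~14.

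Finally, your remark that images do not determine extensions is wrong. By definition $f_j(P,r)=f_j(P)\oplus_j f_j(r)$, and whether $r$ can be appended depends only on $\omega(P)$ and $f_1(P)$. Hence $I_k=I_{k+1}$ at all nodes does force $I_{k+1}=I_{k+2}$. This is exactly what the paper's cut-and-paste step uses, although your counting argument does not need it.
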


\begin{proof}
    By Corollaries~\ref{cor:unchanged_labels} and~\ref{cor:unchanged_labels_2}, it suffices to show that the stopping criterion is satisfied after at most~$m\kappa$ iterations. 
    We first establish this claim for Algorithm~\ref{list:isotonic_labeL_correcting} and then explain how the argument extends to Algorithm~\ref{list:general_labeL_correcting}.
    
    Assume for a contradiction that the stopping criterion $L(v,k+1)=L(v,k)$ for all~$v\in V$ is not satisfied in Algorithm~\ref{list:isotonic_labeL_correcting} after at most~$m\kappa$ iterations.
    Then $L(v,m\kappa+1)\neq L(v,m\kappa)$ must hold for some node~$v\in V$ in iteration~$m\kappa$. Thus, there is either a label~$l \in L(v,m\kappa+1)$ that is not in $L(v, m\kappa)$, or there is a label~$l' \in L(v, m\kappa)$ that is not in~$L(v,m\kappa+1)$. The latter implies that~$l'$ was removed in iteration~$m\kappa$ because a new label dominating it was created. Therefore, in both cases, a new label~$l$ is added to~$L(v,m\kappa+1)$ in iteration~$m\kappa$. By Proposition~\ref{prop:iso2}, this can only be the case if no label weakly dominating~$l$ arises from any $s$-$v$-path of length at most~$m\kappa$. Hence, the $s$-$v$-path~$P$ obtained from~$l$ using the predecessor arcs and label numbers as in the proof of Proposition~\ref{prop:iso2} has length~$|P|>m\kappa$. Since~$|P|>m\kappa$, there must be at least one arc~$r\in R$ that was used $\kappa(r)\geq \kappa+1$ times in~$P$. We let $P=(P_1,\dots,P_{\kappa(r)},P')$ denote the partition of~$P$ into subpaths such that the subpath~$P_i$, $i\in[\kappa(r)]$, ends at node~$\omega(r)$ right after the~$i$th traversal of arc~$r$. Then all the $\kappa(r)\geq \kappa+1$ many $s$-$\omega(r)$-paths~$(P_1,\dots,P_i)$, $i\in[\kappa(r)]$, must have pairwise distinct images since, otherwise, omitting at least one of the nonempty subpaths~$P_i$ in~$P=(P_1,\dots,P_{\kappa(r)},P')$ would have resulted in a path of length at most~$m\kappa$ with the same image as~$P$. This contradicts the assumption that there exist at most~$\kappa$ different images of $s$-$v$-paths.

    \smallskip

    The argument for Algorithm~\ref{list:general_labeL_correcting} is analogous. Suppose that $L(v,m\kappa+1)\neq L(v,m\kappa)$ for some node~$v\in V$. Then at least one new label~$l\notin L(v,m\kappa)$ must have been created for~$v$ in iteration~$m\kappa$, and~$l$ does not arise from any $s$-$v$-path of length at most~$m\kappa$. The remainder of the proof is identical to the argument given above for Algorithm~\ref{list:isotonic_labeL_correcting}.
\end{proof}

A bound~$\kappa$ on the number of distinct images at each node, as required in Proposition~\ref{prop:K_distinct_images}, can in particular be obtained if, for each objective~$j\in[p]$, an upper bound $\kappa(j)\in\mathbb{N}_{\geq1}$ on the number of values that objective~$j$ can attain at a node is known, which yields:

\begin{corollary}\label{cor:bounding_the-values}
Suppose that, for each objective~$j\in[p]$ and each node~$v\in V$, the number of values that objective~$j$ can attain over all temporal $s$-$v$-paths is bounded by $\kappa(j)\in\mathbb{N}_{\geq1}$. Then, with $\kappa\coloneqq\prod_{j=1}^p \kappa(j)$, the stopping criterion $L(v,k+1)=L(v,k)$ for all~$v\in V$ is satisfied after at most $m\kappa$~iterations in line~17 of Algorithm~\ref{list:isotonic_labeL_correcting} and in line~16 of Algorithm~\ref{list:general_labeL_correcting}. Consequently, both algorithms achieve the guarantee stated in Proposition~\ref{prop:K_distinct_images}.
\end{corollary}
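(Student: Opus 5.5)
The plan is to reduce Corollary~\ref{cor:bounding_the-values} directly to Proposition~\ref{prop:K_distinct_images}. To do so, it suffices to show that, under the stated hypothesis, every node~$v\in V$ has at most $\kappa=\prod_{j=1}^p\kappa(j)$ distinct images of $s$-$v$-paths. Once this bound is established, Proposition~\ref{prop:K_distinct_images} applies with this~$\kappa$. It yields that the stopping criterion is met after at most $m\kappa$ iterations, in line~17 of Algorithm~\ref{list:isotonic_labeL_correcting} for isotonic objectives and in line~16 of Algorithm~\ref{list:general_labeL_correcting} for arbitrary objectives. It also yields both guarantees (1) and (2).

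The counting step goes as follows. Fix $v\in V$ and let $W_j(v)\subseteq M_j$ be the set of values $f_j(P)$ over all temporal $s$-$v$-paths~$P$. By hypothesis $|W_j(v)|\leq\kappa(j)$. The image map $P\mapsto(f_1(P),\dots,f_p(P))$ takes values in $W_1(v)\times\dots\times W_p(v)$. This product has cardinality at most $\prod_{j}\kappa(j)=\kappa$, so at most~$\kappa$ distinct images of $s$-$v$-paths exist.

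There is no real obstacle here. The only point to check is that the hypothesis of Proposition~\ref{prop:K_distinct_images} quantifies over all $s$-$v$-paths, with no length restriction. The corollary's hypothesis is likewise stated over all temporal $s$-$v$-paths, so the two match.
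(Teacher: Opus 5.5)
Your proposal is correct and follows the same route as the paper, which obtains the corollary directly from Proposition~\ref{prop:K_distinct_images}. The image of every $s$-$v$-path lies in the product of the sets of values attainable by the individual objectives, so at most $\prod_{j=1}^p\kappa(j)=\kappa$ distinct images exist at each node, and the proposition then gives both the $m\kappa$ iteration bound and the two guarantees.
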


Corollary~\ref{cor:bounding_the-values} applies immediately if the set~$M_j$ is finite for each objective $(M_j, \leq_j, f_j, \oplus_j,\allowbreak \overline{0}_j, \dir_j)$. For example, this holds for the objective $(\{-1,0,1\}, \leq, f, \cdot, 1, \min)$, which is neither isotonic nor monotone. More generally, it suffices that the closure $\cl(f_j(R)) \subseteq M_j$ of the arc values under~$\oplus_j$ is finite. As an illustration, consider the non-isotonic objective $(\mathbb{N}, \leq,\allowbreak f, \oplus, 0, \max)$ defined by $a \oplus b \coloneqq b$ if $a=0$ or $a=b$, and $a \oplus b \coloneqq 0$ otherwise, for $a,b \in \mathbb{N}$. Interpreting each value as a category, maximizing the objective corresponds to finding a path whose arcs share the same, largest possible category. In this case, $|\cl(f_j(R))| \leq m+1$.

A related application of Corollary~\ref{cor:bounding_the-values} is based on the observation that, for each objective~$j\in[p]$ and node~$v\in V$, the number of attainable values is bounded by $(|\delta^-(v)|+1)\leq m+1$ times the maximum number attainable by $s$-$v$-paths sharing the same last arc, where the additional~$+1$ accounts for the zero-arcs path when $s=v$. An analogous bound using $(|\delta^+(s)|+1)\leq m+1$ holds for paths sharing the same first arc. Since the arrival time of a path depends only on its last arc and the start time only on its first arc, it follows that for both the earliest arrival time objective $(\mathbb{Q}_{\geq 0}, \leq, \tau+\lambda, \max, 0, \min)$ and the latest start time objective $(\mathbb{Q}_{\geq 0}\cup\{+\infty\}, \leq, \tau, \min, +\infty, \max)$, the number of attainable values at any node~$v$ is bounded by~$m+1$. 
Similar reasoning applies to many objectives that are neither isotonic, monotone, nor commutative. For instance, consider the objective $(\mathbb{Z}\cup\{-\infty\}, \leq, f, \oplus, -\infty, \min)$ defined by $a \oplus b \coloneqq b$ if $a \leq b$, and $a \oplus b \coloneqq 0$ otherwise. Every path ending with an arc~$r\in R$ can then attain only the two values~$0$ and~$f(r)$, implying that the number of attainable values at any node~$v$ is bounded by~$2m+1$.

\subsection{Rational Additive Objectives}\label{sec:additive_objectives}

In this subsection, we consider objectives that are additive over the rational numbers.

\begin{definition}\label{def:additive_objective}
An objective $(\mathbb{Q}, \leq, f, +, 0, \dir)$, where $+$ denotes addition on $\mathbb{Q}$, $\leq$ the standard order, and $\dir\in\{\min,\max\}$, is called \emph{rational additive}.
\end{definition}

We refer to the variant of the SSMTSPP in which the first objective is the earliest arrival time objective and the remaining $p-1$ objectives are rational additive as the \emph{additive SSMTSPP}. Rational additive objectives are clearly isotonic.

Based on this structure, we consider a modified version of 
Algorithm~\ref{list:isotonic_labeL_correcting}. 
The algorithm is executed with $K \coloneqq m$ and differs only in its termination rule, 
which is replaced as follows in order to detect improving cycles.

\begin{lstlisting}[mathescape=true, caption={Multiobjective Temporal Label Correcting Algorithm for Additive Objectives},label={list:alg2},captionpos=t,abovecaptionskip=-\medskipamount]
17.   if $L(v, k+1) = L(v,k)$ for all $v \in V$ then
18.       return $L(v,k)$ for all $v \in V$
19.   if $k=m$ then
20.       return ``an improving cycle exists''
\end{lstlisting}

The same arguments as for Algorithm~\ref{list:isotonic_labeL_correcting} show that Propositions~\ref{prop:iso1} and~\ref{prop:iso2} remain valid for Algorithm~\ref{list:alg2} with~$K \coloneqq m$. 
Moreover, Corollary~\ref{cor:unchanged_labels} applies as well. Hence, if Algorithm~\ref{list:alg2} terminates in line~18, then, for each~$v\in V$, the returned label set~$L(v,k)$ contains a corresponding label for each image of an efficient $s$-$v$-path and only such labels. Based on these observations, the following two propositions show that Algorithm~\ref{list:alg2} either computes all nondominated images of paths of arbitrary length or correctly detects an improving cycle.

\begin{restatable}{proposition}{propstrictlyisocycle}\label{prop:3}
    If the temporal graph~$G$ contains an improving cycle, then Algorithm~\ref{list:alg2} terminates in line~20 and correctly reports its existence.
\end{restatable}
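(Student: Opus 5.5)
We argue by contradiction. Algorithm~\ref{list:alg2} can only terminate in line~18 or line~20, so suppose $G$ contains an improving cycle $C=(v_0,r_1,\dots,r_c,v_0)$, improving in some rational additive objective~$j\geq 2$, but the algorithm terminates in line~18 in some iteration $k\leq m$. (The earliest arrival time objective cannot be improving: along $(P,C^h)_{h\geq 0}$ its value stays constant for $h\geq1$, since all arcs of $C$ have the same start time and traversal time zero.) Since $f_j$ is additive, $f_j(P,C^h)=f_j(P)+h\,f_j(C)$ for every temporal $s$-$v_0$-path $P$ making the sequence well defined. Definition~\ref{def:augmentnig_cycle} then forces $f_j(C)<0$ if $\dir_j=\min$ and $f_j(C)>0$ if $\dir_j=\max$. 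Consequently $f_j(P,C^h)$ is strictly improving in~$h$ and unbounded in the direction of optimization.

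By Corollary~\ref{cor:unchanged_labels}, which applies to Algorithm~\ref{list:alg2} as noted above, termination in line~18 implies that for each $v\in V$ the finite set $L(v,k)$ contains a corresponding label for the image of every efficient $s$-$v$-path, and only such labels. The key step is to show that every $s$-$v_0$-path is weakly dominated by some path whose image appears in $L(v_0,k)$. Fix any $s$-$v_0$-path $Q$. The set of images of $s$-$v_0$-paths weakly dominating $Q$ with length at most $|Q|$ is finite and nonempty, since there are finitely many such paths, and it contains $Q$ itself. Pick a $|Q|$-efficient path $Q'$ in this set. I expect the main obstacle to be passing from $|Q|$-efficiency to genuine efficiency. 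To do this, I would reuse the stabilization argument from the proof of Theorem~\ref{theorem:iso_main_th}: the label sets no longer change after iteration~$k$, so running the algorithm without any stopping rule up to iteration $K'\geq\max\{|Q|,k\}$ yields $L(v_0,K')=L(v_0,k)$. By Proposition~\ref{prop:iso1} with $K'$ in place of $K$, this set contains a label for every $K'$-efficient image. Applying the same finiteness argument within the paths of length at most $K'$ then gives a label in $L(v_0,k)$ whose image weakly dominates $Q$.

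Now take $Q=(P,C^h)$. For every $h$ there is a label in the finite set $L(v_0,k)$ whose $j$-th component is at least as good as $f_j(P)+h\,f_j(C)$. Since this bound improves without limit as $h\to\infty$, no finite set of rational values can satisfy it for all $h$. This contradicts the existence of $C$, so the algorithm cannot terminate in line~18. It therefore reaches $k=m$ and terminates in line~20, correctly reporting that an improving cycle exists.
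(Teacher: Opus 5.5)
Your proof is correct, but it takes a different route from the paper's. The paper assumes termination in line~18, checks that $L(v_0,k)$ is nonempty, and splits into two cases. If some improving cycle $C$ satisfies $f_j(C)\le 0$ for all $j$, it extends an arbitrary efficient $s$-$v_0$-path $P$ from the returned set by $C$ and argues that $(P,C)$ dominates $P$. Otherwise it argues that repeated traversals of $C$ trade off two objectives and hence produce infinitely many nondominated images, which the finite returned set cannot represent. You instead prove a single covering property: every $s$-$v_0$-path $Q$ is weakly dominated by a label of the finite set $L(v_0,k)$, via the stabilization argument of Theorem~\ref{theorem:iso_main_th} and Proposition~\ref{prop:iso1} for a horizon $K'\ge\max\{|Q|,k\}$. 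You then play this against the unbounded improvement of $f_j(P)+h\,f_j(C)$ along the particular sequence $(P,C^h)_{h\ge 0}$ from Definition~\ref{def:augmentnig_cycle}.

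The paper's version is shorter and mirrors classical negative-cycle reasoning. Yours needs no case split and no separate nonemptiness check, and it avoids two steps the paper leaves implicit. In Case~1, $(P,C)$ is a temporal path only if $f_1(P)\le\tau(r_1)$, and it is strictly worse than $P$ in arrival time if $f_1(P)<\tau(r_1)$, so an arbitrary efficient $P$ does not obviously work. In Case~2, the images of $(P,C^h)$ could a priori be dominated by other paths, so ``infinitely many nondominated images'' needs an argument, whereas your covering property gives the contradiction directly.

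Two minor points: your opening appeal to Corollary~\ref{cor:unchanged_labels} is never actually used. Also, what you obtain is a $K'$-efficient rather than a genuinely efficient dominating path, which is all the argument requires.
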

\begin{proof}
    Without loss of generality, assume that $\dir_j=\min$ for all~$j\in[p]$. Let~$v_0\in V$ be any node such that an improving cycle with start node~$v_0$ exists. 
    Assume for a contradiction that Algorithm~\ref{list:alg2} does not terminate in line~20. Then, the algorithm must terminate in line~18, so $L(v,k+1)=L(v,k)$ holds for all~$v\in V$ in line~17 of the final iteration~$k\leq m$. Moreover, for each~$v\in V$, the returned label set~$L(v,k)$ contains a corresponding label for each image of an efficient $s$-$v$-path and only such labels. 

    We first show that the returned set~$L(v_0,k)$ must be nonempty. 
    Let~$k'$ denote the minimum length of an $s$-$v_0$-path, and let~$P'$ denote some $s$-$v_0$-path of length~$k'$. Then, by the arguments used in the proof of Proposition~\ref{prop:iso2}, a label~$l$ whose corresponding image weakly dominates~$f(P')$ would be contained in~$L(v_0,k')$ after executing the algorithm for $k'$~iterations. Thus, since no label sets can change after iteration~$k$ because $L(v,k)=L(v,k+1)$ for all~$v\in V$ holds at the end of iteration~$k$, such a label must also be contained in~$L(v_0,k)$, which shows that~$L(v_0,k)$ is nonempty.
    We now distinguish two cases:
    
    \textbf{Case 1: {\normalfont There exists an improving cycle $C=(v_0,r_1,v_1,\dots,r_c,v_0)$ that satisfies $f_j(C) \leq 0$ for all~$j\in[p]$.}}
    
    Let~$l\in L(v_0,k)$ be any nondominated label and let~$P$ denote the corresponding efficient $s$-$v_0$-path obtained from~$l$ using the predecessor arcs and label numbers as in the proof of Proposition~\ref{prop:iso2}. Since~$C$ is improving, we must have~$f_j(C)<0$ for at least one objective~$j$. Therefore, the path~$(P,C)$ dominates the path~$P$, which contradicts the efficiency of~$P$. 

    \textbf{Case 2: {\normalfont For each improving cycle $C=(v_0,r_1,v_1,\dots,r_c,v_0)$, there exists an objective $j'\in [p]$ such that $f_{j'}(C)>0$.}}
    
    If~$C$ is any such improving cycle, we must have~$f_j(C)<0$ for at least one objective~$j\neq j'$. Thus, each traversal of~$C$ strictly decreases objective~$j$ while strictly increasing objective~$j'$, which implies that there exist infinitely many nondominated images of $s$-$v_0$-paths. This contradicts the fact that the finite label set~$L(v_0,k)$ returned by the algorithm contains a corresponding label for each image of an efficient $s$-$v$-path.
\end{proof}

\noindent
The following lemma is required for the proof of Proposition~\ref{prop:5}.
\begin{lemma}\label{lemma:4}
    If Algorithm~\ref{list:alg2} terminates in line~20, then the temporal graph~$G$ contains an improving cycle.
\end{lemma}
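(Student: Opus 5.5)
The plan is to argue along the lines of Proposition~\ref{prop:K_distinct_images}, adapted to rational additive objectives. As in Proposition~\ref{prop:3}, we assume without loss of generality that $\dir_j=\min$ for all $j\in[p]$.

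\textbf{Step 1: a long path with a new image.} Suppose Algorithm~\ref{list:alg2} terminates in line~20. Then the stopping criterion fails in iteration~$m$, so some label set changes: $L(v,m+1)\neq L(v,m)$ for some $v\in V$. As in the proof of Proposition~\ref{prop:K_distinct_images}, a deletion can only be caused by an addition, so a new label~$l$ is added to $L(v,m+1)$. By Proposition~\ref{prop:iso2}, applied with $K=m+1$ to the run without early stopping, $l$ corresponds to an $(m+1)$-efficient path. Because $l$ was not weakly dominated by the labels in $L(v,m)$, Proposition~\ref{prop:iso1} shows that no $s$-$v$-path of length at most~$m$ weakly dominates it. Backtracking the predecessor arcs and label numbers then yields an $s$-$v$-path~$P$ of length exactly $m+1$. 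Moreover, $f(P)$ is not weakly dominated by the image of any $s$-$v$-path of length at most~$m$.

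\textbf{Step 2: extracting a cycle.} Since $|P|=m+1>m$, some arc~$r$ occurs twice in~$P$. As in the proof of Theorem~\ref{theorem:no_cycles}, the subpath~$C$ of~$P$ from $\alpha(r)$ at the first traversal of~$r$ to $\alpha(r)$ at the second traversal is a zero-duration cycle. Its start times are constant, and its traversal times are zero apart from possibly the last arc, which must also arrive by the time~$r$ departs. Write $P=(P_1,C,P_2)$. Deleting~$C$ gives the temporal path $(P_1,P_2)$, because $P_1$ arrives no later than $\tau(r)$ and $P_2$ begins with~$r$. The deleted path is shorter than~$P$, and by additivity $f_j(P)=f_j(P_1,P_2)+f_j(C)$ for all $j\geq 2$. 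The first objective coincides on both paths, since the last arc is unchanged, or $P_2$ is trivial and both paths arrive at the same time.

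\textbf{Step 3: the cycle is improving, and the main obstacle.} Since $(P_1,P_2)$ has length at most~$m$, it does not weakly dominate~$P$. Hence $f_j(C)<0$ for some~$j$. The sequence $(P_1,C^h)_{h\geq0}$ is well-defined, because $P_1$ arrives at $\alpha(r)$ no later than $\tau(r)$, which equals the common start time of the arcs of~$C$. Additivity gives $f_j(P_1,C^{h+1})=f_j(P_1,C^h)+f_j(C)<f_j(P_1,C^h)$, so $C$ is improving in objective~$j$ in the sense of Definition~\ref{def:augmentnig_cycle}. The step I expect to need the most care is Step~1. It must be justified that the new label really corresponds to a path whose image is not weakly dominated by any shorter path, combining Propositions~\ref{prop:iso1} and~\ref{prop:iso2} correctly, and that backtracking gives length exactly $m+1$ rather than merely at most $m+1$. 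The temporal feasibility argument for removing~$C$ in Step~2 is routine by comparison.
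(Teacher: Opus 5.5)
Your proof is correct and follows essentially the same route as the paper. A new label added in iteration~$m$ backtracks to an $s$-$v$-path of length exactly $m+1$, and a repeated arc then yields a reachable zero-duration cycle~$C$ with $P=(P_1,C,P_2)$; since the image of the shorter path $(P_1,P_2)$ is weakly dominated by a label in $L(v,m)$ but cannot weakly dominate $P$, additivity forces $f_j(C)<0$ for some~$j$. Your explicit checks fill in details the paper leaves implicit: that the first objective agrees on $P$ and $(P_1,P_2)$ (so $j\geq 2$ and additivity applies), and that $(P_1,C^h)_{h\geq 0}$ satisfies Definition~\ref{def:augmentnig_cycle}. Also, the $(m+1)$-efficiency from Proposition~\ref{prop:iso2} is never actually used; you only need its backtracking construction together with Proposition~\ref{prop:iso1}.
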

\begin{proof}
    If the algorithm returns the existence of an improving cycle in line~20, this implies that $k=m$ and $L(v, m+1) \neq L(v, m)$ for at least one node~$v\in V$. Thus, there is either a label~$l\in L(v,m+1)$ that is not in~$L(v, m)$, or there is a label~$l'\in L(v,m)$ that is not in~$L(v,m+1)$. The latter implies that~$l'$ was removed in iteration~$m$ because a new label dominating it was created. Thus, in both cases, a new label~$l$ is added to~$L(v,m+1)$ in iteration~$m$, and we let~$P$ denote the corresponding $s$-$v$-path of length at most~$m+1$ obtained using the predecessor arcs and label numbers as in the proof of Proposition~\ref{prop:iso2}. Then, that~$P$ must actually have length exactly~$m+1$ since otherwise, the label~$l$ would have already been added in an earlier iteration. Since $|R|=m$, this implies that at least one arc~$r\in R$ is used more than once in~$P$. Thus, the $\alpha(r)$-$\alpha(r)$-subpath starting with the first traversal of~$r$ and ending right before the second traversal of~$r$ is a zero-duration cycle~$C$ reachable from~$s$. 
    
    Let~$P=(P_1,C,P_2)$ and let~$P'=(P_1,P_2)$ denote the $s$-$v$-path of length at most~$m$ obtained from~$P$ by removing the cycle~$C$.
    Thus, a label corresponding to an image that weakly dominates the image of~$P'$ is contained in~$L(v, m)$, which implies that either $f_j(P) < f_j(P')$ must hold for at least one objective~$j$ with $\dir_j = \min$, or $f_j(P) > f_j(P')$ must hold for at least one objective~$j$ with $\dir_j = \max$, since otherwise, the label~$l$ obtained from~$P$ would not be added to~$L(v,m+1)$ in iteration~$m$. 
    By commutativity of addition, $f_j(P) = f_j(P') + f_j(C)$, hence, if $\dir_j = \min$, it follows that $f_j(C)<0$, while if $\dir_j = \max$, then $f_j(C)>0$.
    In both cases, we obtain that~$C$ is an improving cycle in objective~$j$, which finishes the proof.
\end{proof}

\begin{restatable}{proposition}{propstrictlyisoaugcycleequiv}\label{prop:5}
If the temporal graph~$G$ contains no improving cycle, then Algorithm~\ref{list:alg2} terminates in line~18 and, for each $v\in V$, the returned label set~$L(v,k)$ contains a corresponding label for each image of an efficient $s$-$v$-path and only such labels.
\end{restatable}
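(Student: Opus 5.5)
Proposition~\ref{prop:5} follows by combining Lemma~\ref{lemma:4} with the observations made right before Proposition~\ref{prop:3}. Suppose $G$ contains no improving cycle. Algorithm~\ref{list:alg2} runs for at most $m+1$ iterations $k=0,\dots,m$, and in each iteration it either returns in line~18 or reaches line~19. In iteration $k=m$ it must terminate, either in line~18 or in line~20. By the contrapositive of Lemma~\ref{lemma:4}, termination in line~20 would imply that $G$ contains an improving cycle, contradicting our assumption. Hence the algorithm terminates in line~18 in some iteration $k\le m$, and at that point the stopping criterion $L(v,k+1)=L(v,k)$ holds for all $v\in V$.

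The correctness of the output then follows from Corollary~\ref{cor:unchanged_labels}. As remarked before Proposition~\ref{prop:3}, Propositions~\ref{prop:iso1} and~\ref{prop:iso2} remain valid for Algorithm~\ref{list:alg2} with $K\coloneqq m$, because its lines 1--16 coincide with those of Algorithm~\ref{list:isotonic_labeL_correcting} and rational additive objectives are isotonic. Corollary~\ref{cor:unchanged_labels} therefore applies to termination via the stopping criterion. For each $v\in V$, it gives that $L(v,k)$ contains a label for every image of an efficient $s$-$v$-path, and that every label corresponds to such an image.

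The step needing the most care is this transfer of Corollary~\ref{cor:unchanged_labels}. That corollary was derived for Algorithm~\ref{list:isotonic_labeL_correcting}, and I would verify that its proof uses only the label-set stabilization and not the particular value of $K$. Once all label sets stabilize at iteration $k$, the argument from the proof of Theorem~\ref{theorem:iso_main_th} shows that running any number $K'\ge k$ of further iterations would leave them unchanged. Hence $L(v,k)=L(v,K')$ for every $K'$, and by Propositions~\ref{prop:iso1} and~\ref{prop:iso2} these sets are exactly the $K'$-nondominated images for all $K'$.

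It remains to pass from $K'$-efficiency for all large $K'$ to efficiency. Let $P$ be an efficient $s$-$v$-path and set $K'\ge\max(|P|,k)$. Then $P$ is $K'$-efficient, so its image is represented in $L(v,k)$. Conversely, let a label in $L(v,k)$ correspond to a path $P$, and suppose $P$ were dominated by some path $Q$. Taking $K'\ge\max(|P|,|Q|,k)$ shows that $P$ is not $K'$-efficient, which contradicts Proposition~\ref{prop:iso2}. The label-extension and dominance rules are unaffected by the modified termination lines, so this check goes through directly.
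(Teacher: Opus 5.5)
Your proof is correct and follows the paper's route. The paper likewise rules out termination in line~20 via Lemma~\ref{lemma:4}; it also cites Proposition~\ref{prop:3} to state an equivalence, but only the direction you use is needed. It then takes correctness of the output from Corollary~\ref{cor:unchanged_labels} as transferred to Algorithm~\ref{list:alg2}. Your extra verification of that transfer is sound and fills in a step the paper only asserts: stabilization gives $L(v,k)=L(v,K')$ for all $K'\ge k$, and you then pass from $K'$-efficiency for large $K'$ to efficiency.
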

\begin{proof}
    Since Proposition~\ref{prop:3} and Lemma~\ref{lemma:4} show that Algorithm~\ref{list:alg2} terminates in line~20 if and only if the temporal graph~$G$ contains an improving cycle. If~$G$ contains no improving cycle, the algorithm must therefore terminate in line~18, in which case we know that, for each $v\in V$, the returned label set~$L(v,k)$ contains a corresponding label for each image of an efficient $s$-$v$-path and only such labels.
\end{proof}

\noindent
Propositions~\ref{prop:3} and~\ref{prop:5} immediately yield the main theorem of this subsection.
\begin{theorem}\label{theorem:strict_iso_main_th}
    If the temporal graph~$G$ contains an improving cycle, 
    then Algorithm~\ref{list:alg2} correctly reports its existence. Otherwise, for each node~$v \in V$, the set~$L(v,k)$ returned by Algorithm~\ref{list:alg2} contains a corresponding label for each image of an efficient $s$-$v$-path and only such labels.
\end{theorem}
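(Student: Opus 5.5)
The plan is to obtain the theorem directly from the dichotomy established by Proposition~\ref{prop:3} and Proposition~\ref{prop:5}, splitting into two cases according to whether~$G$ contains an improving cycle. Both cases are already covered; the work is mainly in checking that the two propositions together exhaust the behaviour of Algorithm~\ref{list:alg2} and that the output is interpreted correctly in each branch.

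\textbf{Case 1: $G$ contains an improving cycle.} Proposition~\ref{prop:3} states that Algorithm~\ref{list:alg2} then terminates in line~20. I would note that it cannot return in line~18 beforehand, since Proposition~\ref{prop:3} derives a contradiction from any termination via the stopping criterion in line~17. Reporting ``an improving cycle exists'' is the correct answer in this case. Lemma~\ref{lemma:4} gives the converse: whenever line~20 is reached, an improving cycle really exists. This shows the report is never spurious, although the theorem statement itself only needs the forward direction.

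\textbf{Case 2: $G$ contains no improving cycle.} Proposition~\ref{prop:5} gives that the algorithm terminates in line~18. It also gives that each returned set~$L(v,k)$ contains a corresponding label for every image of an efficient $s$-$v$-path and only such labels. This rests on the fact that Propositions~\ref{prop:iso1} and~\ref{prop:iso2} and Corollary~\ref{cor:unchanged_labels} carry over to Algorithm~\ref{list:alg2} with $K\coloneqq m$. The only change from Algorithm~\ref{list:isotonic_labeL_correcting} is the termination rule, and the label-generation lines 1--16 are untouched, so these results transfer verbatim.

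I do not expect a real obstacle, since the theorem is essentially a repackaging of the two propositions. The one subtle point is well-definedness: the algorithm always terminates in exactly one of lines~18 or~20 within at most $m+1$ iterations. That is true because line~19 forces a return at $k=m$ if line~17 has not fired earlier. Combining the two cases then yields the theorem.
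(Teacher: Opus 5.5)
Your proposal is correct and matches the paper, which obtains the theorem immediately by combining Proposition~\ref{prop:3} (an improving cycle forces termination in line~20) with Proposition~\ref{prop:5} (no improving cycle gives termination in line~18 with exactly the efficient images). Your appeal to Lemma~\ref{lemma:4} is not needed for the statement as written, though it is the ingredient the paper uses inside the proof of Proposition~\ref{prop:5}.
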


\section{Conclusion}
\label{sec:conclusion}

In this paper, we present the first algorithms for the single-source multiobjective temporal shortest path problem that do not rely on monotonicity or isotonicity assumptions on the objectives. Our results show that removing these structural assumptions fundamentally changes the behavior of the problem. This already holds for static graphs, where our algorithms can also be applied. Even for isotonic but non-monotone objectives, phenomena arise that have no analogue in static shortest path problems with additive objectives. In particular, arbitrarily long paths may need to be considered to generate all nondominated images, even in the absence of improving cycles. As a consequence, it remains open whether a label correcting algorithm can guarantee termination within a predefined number of iterations in general while still identifying all nondominated images or detecting improving cycles.

An interesting question for future research therefore concerns the computational complexity of deciding whether a given temporal graph with general objectives contains an improving cycle. Another natural direction is to identify additional sufficient conditions under which no bound on the maximum admissible path length is required. Finally, while our label correcting algorithms extend naturally to the all-pairs variant of the multiobjective temporal shortest path problem, designing more efficient algorithms for this setting remains an interesting topic 
for future work.

\bibliography{bibfile}

\end{document}